\newtheorem{fact}{Fact}
\newcommand{\rbr}{{\bf ]\!]}}
\newcommand{\lbr}{{\bf [\![}}
\newcommand{\sem}[1]{\lbr #1 \rbr}
\newcommand{\set}[1]{\left\{ #1 \right\}}
\newcommand{\tuple}[1]{\left\langle #1 \right\rangle}
\renewcommand{\vec}[1]{\mathbf #1}
\newcommand{\tup}[1]{\overline #1}
\newcommand{\ite}[3]{\mathsf{ite}(#1,#2,#3)}
\newcommand{\tupite}[2]{\mathsf{ite}(#1,#2)}
\newcommand{\len}[1]{{|{#1}|}}
\newcommand{\card}[1]{{|\!|{#1}|\!|}}
\newcommand{\proj}[2]{{#1}\!\!\downarrow_{{#2}}}
\renewcommand{\paragraph}[1]{\noindent\emph{#1}}
\newif\ifLongVersion\LongVersiontrue
\newcommand{\teq}{\approx}
\newcommand{\I}{\mathcal{I}}
\newcommand{\J}{\mathcal{J}}
\newcommand{\mods}{\mathbf{I}}
\newcommand{\modsof}[2]{\sem{#1}_{#2}}
\newcommand{\locs}{\mathsf{Loc}}
\newcommand{\data}{\mathsf{Data}}
\newcommand{\nil}{\mathsf{nil}}
\newcommand{\emp}{\mathsf{emp}}
\newcommand{\wand}{\multimap}
\newcommand{\seplog}{\mathsf{SL}}
\newcommand{\seplogl}[1]{\mathsf{SL}\parens{#1}}
\newcommand{\tterm}{\mathsf{t}}
\newcommand{\uterm}{\mathsf{u}}
\newcommand{\vterm}{\mathsf{v}}
\newcommand{\xterm}{\mathsf{x}}
\newcommand{\yterm}{\mathsf{y}}
\newcommand{\zterm}{\mathsf{z}}
\newcommand{\wterm}{\mathsf{w}}
\newcommand{\nextp}{\mathsf{next}}
\newcommand{\datap}{\mathsf{data}}
\newcommand{\lterm}{\ell}
\newcommand{\pto}{\mathsf{pt}}
\newcommand{\bools}{\mathsf{Bool}}
\newcommand{\sets}{\mathsf{Set}}
\newcommand{\lab}{\triangleleft}
\newcommand{\labd}[1]{#1\!\!\Downarrow}
\newcommand{\terms}[1]{\mathrm{Pt}(#1)}
\newcommand{\bounds}[2]{\mathrm{Bnd}(#1,#2)}
\newcommand{\dom}{\mathrm{dom}}
\newcommand{\size}[1]{\mathrm{Size}(#1)}
\newcommand{\parens}[1]{ \left( #1 \right) }
\newcommand{\purify}[1]{\lfloor #1 \rfloor}
\newcommand{\purifyrec}[1]{\purify{#1}^\ast}
\newcommand{\unpurify}[1]{\lceil #1 \rceil}
\newcommand{\quants}[1]{\mathrm{Q}(#1)}
\newcommand{\solvesl}[1]{\mathsf{solve_{SL(#1)}}}
\newcommand{\solveslt}[1]{\mathsf{solve_{#1}}}
\newcommand{\lblf}{\mathsf{lbl}}
\newcommand{\ltrue}{\top}
\newcommand{\lfalse}{\bot}
\newcommand{\ord}[2]{\prec_{#1,#2}}
\newcommand{\qvar}{x}
\newcommand{\sparagraph}[1]{\smallskip
\noindent 
{\bf #1}\ }
\begin{document}

\title{A Decision Procedure for Separation Logic in SMT}
\author{Andrew Reynolds\inst{1} \and Radu Iosif\inst{2} \and Cristina Serban\inst{2} \and Tim King\inst{3}}
\institute{The University of Iowa \and Universit\'e de Grenoble Alpes, CNRS, VERIMAG \and Google Inc.}
\maketitle

\begin{abstract}
This paper presents a complete decision procedure for the entire
quantifier-free fragment of Separation Logic ($\seplog$) interpreted
over heaplets with data elements ranging over a parametric
multi-sorted (possibly infinite) domain. The algorithm uses a
combination of theories and is used as a specialized solver inside a
DPLL($T$) architecture. A prototype was implemented within the CVC4
SMT solver. Preliminary evaluation suggests the possibility of using
this procedure as a building block of a more elaborate theorem prover
for $\seplog$ with inductive predicates, or as back-end of a bounded model
checker for programs with low-level pointer and data manipulations.
\end{abstract}

\section{Introduction}

Separation Logic ($\seplog$) \cite{Reynolds02} is a logical framework
for describing dynamically allocated mutable data structures generated
by programs that use pointers and low-level memory allocation
primitives. The logics in this framework are used by a number of
academic (\textsc{Space Invader} \cite{SpaceInvader}, \textsc{Sleek}
\cite{Sleek}), and industrial (\textsc{Infer} \cite{Infer}) tools for
program verification. The main reason for chosing to work within the
$\seplog$ framework is its ability to provide compositional proofs of
programs, based on the principle of {\em local reasoning}: analyzing
different parts of the program (e.g.\ functions, threads), that work
on \emph{disjoint parts of the global heap}, and combining the
analysis results a-posteriori. 

The main ingredients of $\seplog$ are \begin{inparaenum}[(i)] 
\item the \emph{separating conjunction} $\phi * \psi$, which
  asserts that $\phi$ and $\psi$ hold for separate portions of the
  memory (heap), 
\item the \emph{magic wand} $\varphi \wand \psi$, which asserts that
  any extension of the heap by a disjoint heap that satisfies $\phi$
  must satisfy $\psi$, and
\item the \emph{frame rule}, that exploits separation to provide
  modular reasoning about programs.
\end{inparaenum}
Consider, for instance, a memory configuration (heap), in which two
cells are allocated, and pointed to by the program variables
$\mathsf{x}$ and $\mathsf{y}$, respectively, where the $\mathsf{x}$
cell has an outgoing selector field to the $\mathsf{y}$ cell, and
viceversa. The heap can be split into two disjoint parts, each
containing exactly one cell, and described by an atomic proposition
$\mathsf{x} \mapsto \mathsf{y}$ and $\mathsf{y} \mapsto \mathsf{x}$,
respectively. Then the entire heap is described by the formula
$\mathsf{x} \mapsto \mathsf{y} * \mathsf{y} \mapsto \mathsf{x}$, which
reads \emph{$\mathsf{x}$ points to $\mathsf{y}$ and separately
  $\mathsf{y}$ points to $\mathsf{x}$}.

The expressive power of $\seplog$ comes with an inherent difficulty of
automatically reasoning about the satisfiability of its formulae, as
required by push-button program analysis tools. Indeed, $\seplog$
becomes undecidable in the presence of first-order quantification,
even when the fragment uses only points-to predicates, without the
separating conjunction or the magic
wand~\cite{CalcagnoYangOHearn01}. Moreover, the quantifier-free
fragment with no data constraints, using only points-to predicates
$\mathsf{x} \mapsto (\mathsf{y},\mathsf{z})$, where $\mathsf{x},
\mathsf{y}$ and $\mathsf{z}$ are interpreted as memory addresses, is
PSPACE-complete, due to the implicit quantification over memory
partitions, induced by the semantics of the separation logic
connectives, which can, moreover, be arbitrarily
nested~\cite{CalcagnoYangOHearn01}.

This paper presents a decision procedure for quantifier-free $\seplog$
which is entirely parameterized by a base theory $T$ of heap locations
and data, i.e.\ the sorts of memory addresses and their contents can
be chosen from a large variety of available theories handled by
Satisfiability Modulo Theories (SMT) solvers, such as linear integer
(real) arithmetic, strings, sets, uninterpreted functions, etc. Given
a base theory $T$, we call $\seplog(T)$ the set of separation logic
formulae built on top of $T$, by considering points-to predicates and
the separation logic connectives. 

\sparagraph{Contributions} First, we show that the satisfiability
problem for the quantifier-free fragment of $\seplog(T)$ is
\textsc{PSPACE}-complete, provided that the satisfiability of the
quantifier-free fragment of the base theory $T$ is in
\textsc{PSPACE}. Our method is based on a semantics-preserving
translation of $\seplog(T)$ into second-order $T$ formulae with
quantifiers over a domain of sets and uninterpreted functions, whose
cardinality is polynomially bound by the size of the input
formula. For the fragment of $T$ formulae produced by the translation
from $\seplog(T)$, we developped a lazy quantifier instantiation
method, based on counterexample-driven refinement. We show that the
quantifier instantiation algorithm is sound, complete and terminates
on the fragment under consideration. We present our algorithm for the
satisfiability of quantifier-free $\seplog(T)$ logics as a component
of a DPLL($T$) architecture
\cite{GanzingerHagenNieuwenhuisOliverasTinelli04}, which is widely
used by modern SMT solvers. We have implemented a prototype solver as
a branch of the CVC4 SMT solver ~\cite{CVC4-CAV-11} and carried out
experiments that handle non-trivial examples quite
effectively. Applications of our procedure
include:~\begin{compactenum}
\item Integration within theorem provers for $\seplog$ with inductive
  predicates. Most inductive provers for $\seplog$ use a high-level
  proof search strategy relying on a separate decision procedure for
  entailments in the non-inductive fragment, used to simplify the
  proof obligations, by discharging the non-inductive parts of both
  left- and right-hand sides, and attain an inductive hypothesis
  \cite{Sleek,cyclist}. Due to the hard problem of proving entailments
  in the non-inductive fragment of $\seplog$, these predicates use
  very simple non-inductive formulae (a list of points-to propositions
  connected with separating conjunction), for which entailments are
  proved by syntactic substitutions and matching. Our work aims at
  extending the language of inductive $\seplog$ solvers, by
  outsourcing entailments in a generic non-inductive fragment to a
  specialized procedure. To this end, we conducted experiments on
  several entailments corresponding to finite unfoldings of inductive
  predicates used in practice (Section \ref{sec:eval}).
\item Use as back-end of a bounded model checker for programs with
  pointer and data manipulations, based on a complete weakest
  precondition calculus that involves the magic wand
  connective~\cite{IshtiaqOHearn01}. To corroborate this hypothesis,
  we tested our procedure on verification conditions automatically
  generated by applying the weakest precondition calculus described
  in~\cite{IshtiaqOHearn01} to several program fragments (Section
  \ref{sec:eval}).
\end{compactenum} 

\sparagraph{Related Work} The study of the algorithmic properties of
Separation Logic \cite{Reynolds02} has produced an extensive body of
literature over time. We need to distinguish between $\seplog$ with
inductive predicates and restrictive non-inductive fragments, and
$\seplog$ without inductive predicates, which is the focus of this
paper.

Regarding $\seplog$ with fixed inductive predicates, Perez and
Rybalchenko \cite{NavarroPerez2011} define a theorem proving framework
relying on a combination of $\seplog$ inference rules dealing with
singly-linked lists only, and a superposition calculus dealing with
equalities and aliasing between variables. Concerning $\seplog$ with
generic user-provided inductive predicates, the theorem prover
\textsc{Sleek} \cite{Sleek} implements a semi-algorithmic entailment
check, based on unfoldings and unifications. Along this line of work,
the theorem prover \textsc{Cyclist} \cite{cyclist} builds entailment
proofs using a sequent calculus. More recently, the tool
\textsc{Slide} \cite{slide} reduces the entailment between inductive
predicates to an inclusion between tree automata. The great majority
of these inductive provers focus on applying induction strategies
efficiently, and consider a very simple fragment of non-inductive
$\seplog$ formulae, typically conjunctions of equalities and
disequalities between location variables and separated points-to
predicates, without negations or the magic wand. On a more general
note, the tool \textsc{Spen} \cite{spen} considers also arithmetic
constraints between the data elements in the memory cells, but fixes
the shape of the user-defined predicates.

The idea of applying SMT techniques to decide satisfiability of
$\seplog$ formulae is not new. In their work, Piskac, Wies and
Zufferey translate from $\seplog$ with singly-linked list segments
\cite{Piskac2013} and trees \cite{Piskac2014}, respectively, into
first-order logics (\textsc{Grass} and \textsc{Grit}) that are
decidable in $\textsc{NP}$. The fragment handled in this paper is
incomparable to the logics \textsc{Grass} \cite{Piskac2013} and
\textsc{Grit} \cite{Piskac2014}. On one hand, we do not consider
predicates defining recursive data structures, such as singly-linked
lists. On the other hand, we deal with the entire quantifier-free
fragment of $\seplog$, including arbitrary nesting of the magic wand,
separating conjunction and classical boolean connectives. As a result,
the decision problem we consider is \textsc{PSPACE}-complete, due to
the possibility of arbitrary nesting of the boolean and $\seplog$
connectives. To the best of our knowledge, our implementation is also
the first to enable theory combination involving $\seplog$, in a
fine-grained fashion, directly within the DPLL($T$) loop. Indeed, we
do not translate $\seplog$ into classical multi-sorted logic upfront,
but rather use lazy evaluation and counterexample-driven quantifier
instantiation, that can solve non-trivial instances of the
satisfiability problem efficiently.

The first theoretical results on decidability and complexity of
$\seplog$ without inductive predicates are given by Calcagno, Yang and
O'Hearn \cite{CalcagnoYangOHearn01}. They show that the
quantifier-free fragment of $\seplog$ without data constraints is
PSPACE-complete by an argument that enumerates a finite (yet large)
set of heap models. Their argument shows also the difficulty of the
problem, however it cannot be directly turned into an effective
decision procedure, because of the ineffectiveness of model
enumeration. A more elaborate tableau-based decision procedure is
described by M\'ery and Galmiche \cite{MeryGalmiche07}. This procedure
generates verification conditions on-demand, but here no data
constraints are considered, either. However, combined, the results of
\cite{CalcagnoYangOHearn01} and \cite{MeryGalmiche07} have inspired
our decision procedure, that is parameterized by a decidable
quantifier-free fragment of a base theory, and can be used in
combination with any available SMT theory.

Our procedure relies on a decision procedure for quantifier-free
parametric theory of sets and on-demand techniques for quantifier
instantiation.  Decision procedures for the theory of sets in SMT are
given in~\cite{suter2011sets,BansalPhd}.  Techniques for model-driven
quantifier instantiation were introduced in the context of SMT
in~\cite{GeDeM-CAV-09}, and have been developed recently
in~\cite{ReynoldsDKBT15Cav,bjornerplaying}.

\section{Preliminaries}

We consider formulae in multi-sorted first-order logic, over a
\emph{signature} $\Sigma$ consisting of a countable set of sort
symbols and a set of function symbols.  We assume that signatures
always include a boolean sort $\bools$ with constants $\ltrue$ and
$\lfalse$ denoting true and false respectively, and that each sort
$\sigma$ is implicitly equipped with an equality predicate $\teq$ over
$\sigma \times \sigma$.  Moreover, we may assume without loss of
generality that equality is the only predicate belonging to $\Sigma$,
since we can model other predicate symbols as function symbols with
return sort $\bools$\footnote{For brevity, we may write $p( \vec t )$
  as shorthand for $p( \vec t ) \teq \ltrue$, where $p$ is a function
  into $\bools$.}.

We consider a set $\vec x$ of first-order variables, with associated
sorts, and denote by $\varphi(\vec x)$ the fact that the free
variables of the formula $\varphi$ belong to $\vec x$. Given a
signature $\Sigma$, well-sorted terms, atoms, literals, and formulae
are defined as usual, and referred to respectively as
\emph{$\Sigma$-terms}. We denote by $\phi[\varphi]$ the fact that
$\varphi$ is a subformula (subterm) of $\phi$ and by
$\phi[\psi/\varphi]$ the result of replacing $\varphi$ with $\psi$ in
$\phi$. We write $\forall x. \varphi$ to denote universal
quantification over variable $x$, where $x$ occurs as a \emph{free
  variable} in $\varphi$. If $\vec x = \tuple{x_1,\ldots,x_n}$ is a
tuple of variables, we write $\forall. \vec x\, \varphi$ as an
abbreviation of $\forall. x_1 \cdots \forall x_n\, \varphi$.  We say
that a $\Sigma$-term is \emph{ground} if it contains no free
variables.  We assume $\Sigma$ contains an if-then-else operator
$\ite{b}{\tterm}{\uterm}$, of sort $\bools \times \sigma \times \sigma
\to \sigma$, for each sort $\sigma$, that evaluates to $\tterm$ if
$b=\top$ and to $\uterm$ otherwise.

A \emph{$\Sigma$-interpretation $\I$} maps:\begin{inparaenum}[(i)]
\item each set sort symbol $\sigma \in \Sigma$ to a non-empty set
  $\sigma^\I$, the \emph{domain} of $\sigma$ in $\I$, 
\item each function symbol $f \in \Sigma$ of sort $\sigma_1 \times
  \ldots \times \sigma_n \rightarrow \sigma$ to a total function
  $f^\I$ of sort $\sigma^\I_1 \times \ldots \times \sigma^\I_n
  \rightarrow \sigma^\I$ if $n > 0$, and to an element of $\sigma^\I$
  if $n = 0$, and
\item each variable $x\in\vec x$ to an element of $\sigma_x^\I$, where
  $\sigma_x$ is the sort symbol associated with $x$.
\end{inparaenum}
We denote by $\tterm^\I$ the interpretation of a term $\tterm$ induced
by the mapping $\I$. The satisfiability relation between
$\Sigma$-interpretations and $\Sigma$-formulae, written $\I \models
\varphi$, is defined inductively, as usual. We say that $\I$ is
\emph{a model of $\varphi$} if $\I\models\varphi$.

A \emph{first-order theory} is a pair $T = (\Sigma, \mods)$ where
$\Sigma$ is a signature and $\mods$ is a non-empty set of
$\Sigma$-interpretations, the \emph{models} of $T$. For a formula
$\varphi$, we denote by $\modsof{\varphi}{T} = \set{\I \in \mods \mid
  \I \models \varphi}$ its set of $T$-models. A $\Sigma$-formula
$\varphi$ is \emph{$T$-satisfiable} if $\modsof{\varphi}{T} \neq
\emptyset$, and \emph{$T$-unsatisfiable} otherwise. A $\Sigma$-formula
$\varphi$ is \emph{$T$-valid} if $\modsof{\varphi}{T} = \mods$,
i.e.\ if $\neg\varphi$ is $T$-unsatisfiable. A formula $\varphi$
\emph{$T$-entails} a $\Sigma$-formula $\phi$, written $\varphi
\models_T \phi$, if every model of $T$ that satisfies $\varphi$ also
satisfies $\phi$. The formulae $\varphi$ and $\psi$ are
\emph{$T$-equivalent} if $\varphi \models_T \phi$ and $\phi \models_T
\varphi$, and \emph{equisatisfiable} (\emph{in $T$}) if $\psi$ is
$T$-satisfiable if and only if $\varphi$ is $T$-satisfiable.
Furthermore, formulas $\varphi$ and $\psi$ are \emph{equivalent (up to
  $\vec k$)} if they are satisfied by the same set of models (when
restricted to the interpretation of variables $\vec k$).  The
$T$-satisfiability problem asks, given a $\Sigma$-formula $\varphi$,
whether $\sem{\varphi}_T \neq \emptyset$, i.e.\ whether $\varphi$ has
a $T$-model.

\subsection{Separation Logic} 

In the remainder of the paper we fix a theory $T=(\Sigma,\mods)$, such
that the $T$-satisfiability for the language of quantifier-free
boolean combinations of equalities and disequalties between
$\Sigma$-terms is decidable. We fix two sorts $\locs$ and $\data$ from
$\Sigma$, with no restriction other than the fact that $\locs$ is
always interpreted as a countably infinite set. We refer to
\emph{Separation Logic} for $T$, written $\seplogl{T}$, as the set of
formulae generated by the syntax:
\[\begin{array}{lcl}
\phi & := & \tterm \teq \uterm \mid \tterm \mapsto \uterm \mid \emp 
\mid \phi_1 * \phi_2 \mid \phi_1 \wand \phi_2 \mid \phi_1 \wedge \phi_2 \mid \neg\phi_1
\end{array}\]
where $\tterm$ and $\uterm$ are well-sorted $\Sigma$-terms and that
for any atomic proposition $\tterm \mapsto \uterm$, $\tterm$ is of
sort $\locs$ and $\uterm$ is of sort $\data$. Also, we consider that
$\Sigma$ has a constant $\nil$ of sort $\locs$, with the intended
meaning that $\tterm \mapsto \uterm$ never holds when
$\tterm\teq\nil$. In the following, we write $\phi\vee\psi$ for
$\neg(\neg\phi\wedge\neg\psi)$ and $\phi\Rightarrow\psi$ for
$\neg\phi\vee\psi$.

Given an interpretation $\I$, a \emph{heap} is a finite partial
mapping $h : \locs^\I \rightharpoonup_{\mathrm{fin}} \data^\I$. For a
heap $h$, we denote by $\dom(h)$ its domain. For two heaps
$h_1$ and $h_2$, we write $h_1 \# h_2$ for $\dom(h_1) \cap
\dom(h_2) = \emptyset$ and $h = h_1 \uplus h_2$ for $h_1 \#
h_2$ and $h = h_1 \cup h_2$. For an interpretation $\I$, a heap $h :
\locs^\I \rightharpoonup_{\mathrm{fin}} \data^\I$ and a $\seplog(T)$
formula $\phi$, we define the satisfaction relation $\I,h
\models_{\seplog} \phi$ inductively, as follows:
\[\begin{array}{lcl}
\I,h \models_{\seplog} \emp & \iff & h = \emptyset \\
\I,h \models_{\seplog} \tterm \mapsto \uterm & \iff & h = \{(\tterm^\I,\uterm^\I)\} 
\text{ and } \tterm^\I\not\teq\nil^\I \\
\I,h \models_{\seplog} \phi_1 * \phi_2 & \iff & \exists h_1,h_2 \text{ . } h=h_1\uplus h_2
\text{ and } \I,h_i \models_{\seplog} \phi_i, \text{ for all } i = 1,2 \\
\I,h \models_{\seplog} \phi_1 \wand \phi_2 & \iff & \forall h' \text{ if } h'\#h \text{ and } 
\I,h' \models_{\seplog} \phi_1 \text{ then } \I,h'\uplus h \models_{\seplog} \phi_2
\end{array}\]
The satisfaction relation for the equality atoms $\tterm \teq \uterm$
and the Boolean connectives $\wedge$, $\neg$ are the classical ones from
first-order logic.
The \emph{$(\seplog,T)$-satisfiability} problem asks whether there is a
$T$-model $\I$ such that $(\I,h) \models_{\seplog}$ for some heap $h$.

In this paper we tackle the $(\seplog,T)$-satisfiability problem,
under the assumption that the quantifier-free data theory
$T=(\Sigma,\mods)$ has a decidable satisfiability problem for
constraints involving $\Sigma$-terms. It has been proved
\cite{CalcagnoYangOHearn01} that the satisfiability problem is
PSPACE-complete for the fragment of separation logic in which $\data =
\locs \times \locs$. Here we generalize this result to any combination
of theories whose satisfiability, for the quantifier-free fragment, is
in \textsc{PSPACE}. This is, in general, the case of most SMT
theories, which are typically in NP, such as the linear arithmetic of
integers and reals, possibly with sets and uninterpreted functions,
etc.

\section{Reducing $\seplogl{T}$ to Multisorted Second-Order Logic} 

It is well-known \cite{Reynolds02} that separation logic cannot be
formalized as a classical (unsorted) first-order theory, for instance,
due to the behavior of the $*$ connective, that does not comply with
the standard rules of contraction $\phi \Rightarrow \phi * \phi$ and
weakening $\phi * \varphi \Rightarrow \phi$\footnote{Take for instance
  $\phi$ as $x \mapsto 1$ and $\varphi$ as $y \mapsto 2$.}. The basic
reason is that $\phi * \varphi$ requires that $\phi$ and $\varphi$
hold on \emph{disjoint} heaps. Analogously, $\phi \wand \varphi$ holds
on a heap whose extensions, by disjoint heaps satisfying $\phi$, must
satisfy $\varphi$. In the following, we leverage from the expressivity
of multi-sorted first-order theories and translate $\seplog(T)$
formulae into quantified formulae in the language of $T$, assuming
that $T$ subsumes a theory of sets and uninterpreted functions.

The integration of separation logic within the DPLL(T) framework
\cite{GanzingerHagenNieuwenhuisOliverasTinelli04} requires the input
logic to be presented as a multi-sorted logic. To this end, we assume,
without loss of generality, the existence of a fixed theory
$T=(\Sigma,\mods)$ that subsumes a theory of sets $\sets(\sigma)$
\cite{BansalPhd}, for any sort $\sigma$ of set elements, whose
functions are the union $\cup$, intersection $\cap$ of sort
$\sets(\sigma) \times \sets(\sigma) \rightarrow \sets(\sigma)$,
singleton $\set{.}$ of sort $\sigma \rightarrow \sets(\sigma)$ and
emptyset $\emptyset$ of sort $\sets(\sigma)$. We write $\lterm
\subseteq \lterm'$ as a shorthand for $\lterm \cup \lterm' \teq
\lterm'$ and $\tterm \in \lterm$ for $\set{\tterm} \subseteq \lterm$,
for any terms $\lterm$ and $\lterm'$ of sort $\sets(\sigma)$ and
$\tterm$ of sort $\sigma$. The interpretation of the functions in the
set theory is the classical (boolean) one.

Also, we assume without loss of generality the existence of infinitely
many function symbols $\pto,\pto',\ldots \in \Sigma$ of sort $\locs
\mapsto \data$, where $\locs$ and $\data$ are two fixed sorts of $T$,
such that for any interpretation $\I\in\mods$, $\locs^\I$ is an
infinite countable set\footnote{The generalization of $\seplog(T)$ to
  finite interpretations of $\locs$ is considered as future work.}.
We do not assume any particular interpretation of these symbols in the
following and treat them as uninterpreted function symbols.


The main idea is to express the atoms and connectives of separation
logic in multi-sorted second-order logic by means of a transformation,
called \emph{labeling}, which introduces \begin{inparaenum}[(i)]
\item constraints over variables of sort $\sets(\locs)$ and 
\item terms over uninterpreted \emph{points-to} functions of sort
  $\locs\rightarrow\data$.
\end{inparaenum} 
We describe the labeling transformation using judgements of the form
$\phi \lab [\tup\lterm,\tup\pto]$, where $\phi$ is a $\seplog(T)$
formula, $\tup\lterm = \tuple{\lterm_1,\ldots, \lterm_n}$ is a tuple
of variables of sort $\sets(\locs)$ and $\tup\pto =
\tuple{\pto_1,\ldots,\pto_n}$ is a tuple of uninterpreted function
symbols occurring under the scopes of universal quantifiers. To ease
the notation, we write $\lterm$ and $\pto$ instead of the singleton
tuples $\tuple{\lterm}$ and $\tuple{\pto}$. In the following, we also
write $\bigcup\tup\lterm$ for $\lterm_1 \cup \ldots \cup \lterm_n$,
$\lterm' \cap \tup\lterm$ for $\tuple{\lterm' \cap \lterm_1, \ldots,
  \lterm' \cap \lterm_n}$, $\lterm' \cdot \tup\lterm$ for
$\tuple{\lterm', \lterm_1, \ldots, \lterm_n}$ and
$\tupite{\tterm\in\tup\lterm}{\tup\pto(\tterm)=\uterm}$ for
$\ite{\tterm\in\lterm_1}{\pto_1(\tterm)=\uterm}{\ite{\tterm\in\lterm_2}{\pto_2(\tterm)=\uterm}{
    \ldots,\ite{\tterm\in\lterm_n}{\pto_n(\tterm)=\uterm}{\top}\ldots}}$.

Intuitively, a labeled formula $\phi \lab [\tup\lterm,\tup\pto]$ says
that it is possible to build, from any of its satisfying
interpretations $\I$, a heap $h$ such that $\I,h \models_{\seplog}
\phi$, where $\dom(h)=\lterm_1^\I \cup \ldots \cup
\lterm_n^\I$ and $h = \proj{\pto_1^\I}{\lterm_1^\I} \cup \ldots \cup
\proj{\pto_n^\I}{\lterm_n^\I}$\footnote{We denote by $\proj{F}{D}$ the
  restriction of the function $F$ to the domain
  $D\subseteq\dom(F)$.}. More precisely, a variable $\lterm_i$
defines a slice of the domain of the (global) heap, whereas the
restriction of $\pto_i$ to (the interpretation of) $\lterm_i$
describes the heap relation on that slice. Observe that each
interpretation of $\tup\lterm$ and $\tup\pto$, such that
$\lterm^\I_i \cap \lterm^\I_j = \emptyset$, for all $i \neq j$,
defines a unique heap.

First, we translate an input $\seplog(T)$ formula $\phi$ into a
labeled second-order formula, with quantifiers over sets and
uninterpreted functions, defined by the rewriting rules in Figure
\ref{fig:labeling-rules}. A labeling step $\phi[\varphi]
\Longrightarrow \phi[\psi/\varphi]$ applies if $\varphi$ and $\psi$
match the antecedent and consequent of one of the rules in Figure
\ref{fig:labeling-rules}, respectively. It is not hard to show that
this rewriting system is confluent, and we denote by $\labd{\phi}$ the
normal form of $\phi$ with respect to the application of labeling
steps. 

\begin{figure}[htb]
\vspace*{-\baselineskip}
\[\begin{array}{ccl}
\infer[]{\neg\forall \lterm_1 \forall \lterm_2 ~.~ \neg(\lterm_1 \cap \lterm_2 \teq \emptyset \wedge \lterm_1 \cup \lterm_2 \teq \bigcup\lterm
\wedge \phi\lab[\lterm_1\cap\tup\lterm,\tup\pto] \wedge \psi\lab[\lterm_2\cap\tup\lterm,\tup\pto])}{(\phi * \psi) \lab [\tup\lterm,\tup\pto]} 
& \hspace*{0.8cm} &
\infer[]{\phi \lab [\tup\lterm,\tup\pto] \wedge \psi \lab [\tup\lterm,\tup\pto]}{(\phi \wedge \psi) \lab [\tup\lterm,\tup\pto]} 
\\\\
\infer[]{
\forall \lterm' \forall \pto' ~.~ (\lterm'\cap(\bigcup\tup\lterm) \teq \emptyset \wedge \phi \lab [\lterm',\pto']) 
\Rightarrow \psi \lab [\lterm'\cdot\tup\lterm,\pto'\cdot\tup\pto]}{
(\phi \wand \psi) \lab [\tup\lterm,\tup\pto]}
& \hspace*{0.8cm} &
\infer[]{\neg(\phi \lab [\tup\lterm,\tup\pto])}{(\neg\phi) \lab [\tup\lterm,\tup\pto]}
\\\\
\infer[]{\bigcup\tup\lterm \teq \set{\tterm} \wedge 
\tupite{\tterm\in\tup\lterm}{\tup\pto(\tterm) \teq \uterm} \wedge \tterm\not\teq\nil}{\tterm \mapsto \uterm \lab [\tup\lterm,\tup\pto]}
\hspace*{1cm} 
\infer[]{\bigcup\tup\lterm \teq \emptyset}{\emp \lab [\tup\lterm,\tup\pto]}
& \hspace*{0.8cm} &
\infer[\text{if } \varphi \text{ is a $\Sigma$-formula}]{\varphi}{\varphi \lab [\tup\lterm,\tup\pto]}
\end{array}\] 
\caption{Labeling Rules}
\label{fig:labeling-rules}
\vspace*{-\baselineskip}
\end{figure}

\begin{example}\label{ex:labeling2}
Consider the $\seplog(T)$ formula $(x \mapsto a \wand y \mapsto b)
\wedge \emp$. The reduction to second-order logic is given below:
\[\begin{array}{lcr}
((x \mapsto a \wand y \mapsto b) \wedge \emp) \lab [\lterm,\pto] & \hspace*{1mm}& \stackrel{\wedge}{\Longrightarrow} \\[-1mm] 
(x \mapsto a \wand y \mapsto b) \lab [\lterm,\pto] \wedge \emp \lab [\lterm,\pto] & \hspace*{1mm}& \stackrel{\emp}{\Longrightarrow} \\[-1mm]
(x \mapsto a \wand y \mapsto b) \lab [\lterm,\pto] \wedge \lterm\teq \emptyset & \hspace*{1mm}& \stackrel{\wand}{\Longrightarrow} \\[-1mm]
\lterm\teq \emptyset \wedge \forall \lterm' \forall \pto' ~.~ \lterm'\cap\lterm\teq \emptyset \wedge (x \mapsto a \lab [\lterm',\pto']) \Rightarrow 
y \mapsto b \lab [\tuple{\lterm',\lterm},\tuple{\pto',\pto}]  & \hspace*{1mm}& \stackrel{\mapsto}{\Longrightarrow} \\[-1mm]
\\[-2mm]
\begin{array}{l}
\lterm\teq \emptyset \wedge \forall \lterm' \forall \pto' ~.~ \lterm'\cap\lterm\teq \emptyset \wedge 
\lterm'\teq \set{x} \wedge \ite{x\in\lterm'}{\pto'(x)\teq a}{\top} \wedge x\not\teq\nil \Rightarrow \\[-1mm]
\lterm'\cup\lterm\teq \set{y} \wedge \ite{y\in\lterm'}{\pto'(y)\teq b}{\ite{y\in\lterm}{\pto(y)\teq b}{\top}} \wedge y\not\teq\nil 
\enspace\enspace\blacksquare
\end{array} 
\end{array}\]
\end{example}

The following lemma reduces the $(\seplog,T)$-satisfiability problem
to the satisfiability of a quantified fragment of the multi-sorted
second-order theory $T$, that contains sets and uninterpreted
functions. For an interpretation $\I$, a variable $x$ and a value $s \in
\sigma_{\!\!x}^\I$, we denote by $\I[x\leftarrow s]$ the extension of
$\I$ which maps $x$ into $s$ and behaves like $\I$ for all other
symbols. We extend this notation to tuples
$\tup{x}=\tuple{x_1,\ldots,x_n}$ and $\tup{s}=\tuple{s_1,\ldots,s_n}$
and write $\I[\tup{x}\leftarrow\tup{s}]$ for $\I[x_1\leftarrow s_1]
\ldots [x_n \leftarrow s_n]$. For a tuple of heaps
$\tup{h}=\tuple{h_1,\ldots,h_n}$ we write $\dom(\tup{h})$ for
$\tuple{\dom(h_1), \ldots, \dom(h_n)}$. 

\begin{lemma}\label{lemma:labeling}
  Given a $\seplog(T)$ formula $\varphi$ and tuples $\tup\lterm =
  \tuple{\lterm_1,\ldots,\lterm_n}$ and $\tup\pto =
  \tuple{\pto_1,\ldots,\pto_n}$ for $n>0$, for any interpretation $\I$
  of $T$ and any heap $h$: \(\I,h \models_{\seplog} \varphi\) if and
  only if \begin{compactenum}
  \item\label{it1:labeling} for all heaps $\tup{h}=\tuple{h_1,\ldots,h_n}$ such that
  $h=h_1 \uplus \ldots \uplus h_n$, 
  \item\label{it2:labeling} for all heaps $\tup{h'}=\tuple{h'_1,\ldots,h'_n}$ such that
    $h_1\subseteq h'_1, \ldots, h_n\subseteq h'_n$,
  \end{compactenum}
  we have
  \(\I[\tup\lterm \leftarrow
    \dom(\tup{h})][\tup\pto \leftarrow \tup{h'}] \models_T
  \labd{\varphi \lab [\tup\lterm,\tup\pto]}\enspace.\)
\end{lemma}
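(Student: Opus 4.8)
\emph{Proof plan.} I would prove both directions by structural induction on $\varphi$, reading the statement as universally quantified over $n>0$ and over all length-$n$ tuples $\tup\lterm,\tup\pto$, so that the recursive occurrences of the labeling (which change the tuple length) are covered by the induction hypothesis. It is cleanest to establish first, by its own induction, a \emph{locality} lemma: for fixed $\I$ and $h$, the truth value of
\[\I[\tup\lterm\leftarrow\dom(\tup{h})][\tup\pto\leftarrow\tup{h'}]\models_T\labd{\varphi \lab [\tup\lterm,\tup\pto]}\]
depends neither on the chosen decomposition $h=h_1\uplus\cdots\uplus h_n$ nor on the chosen componentwise extensions $h_i\subseteq h'_i$, but only on $\I$ and $h$. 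Locality makes precise the remark preceding the lemma — that a disjoint family of slices determines a unique heap, which is all the labeled formula ``sees'' — and it is exactly what lets the outer ``for all $(\tup{h},\tup{h'})$'' be collapsed to a single representative in the Boolean cases, notably under negation, where a universal quantifier does not commute with complementation. Throughout I assume, as the labeling intends, that the set variables and function symbols bound by the rules are chosen fresh.

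The base cases are direct. When $\varphi\equiv\emp$ the labeled formula is $\bigcup\tup\lterm\teq\emptyset$, which under the extension evaluates to ``$\dom(h)\teq\emptyset$'' for every decomposition, matching $\I,h\models_{\seplog}\emp$. When $\varphi\equiv\tterm\mapsto\uterm$ the outer conjuncts reduce to $\dom(h)\teq\set{\tterm^\I}$ and $\tterm^\I\not\teq\nil^\I$, and — once $\dom(h)\teq\set{\tterm^\I}$ — in $\tupite{\tterm\in\tup\lterm}{\tup\pto(\tterm)\teq\uterm}$ the branch for the unique $j$ with $\tterm^\I\in\dom(h_j)$ fires and yields $h'_j(\tterm^\I)\teq\uterm^\I$, where $h'_j(\tterm^\I)=h(\tterm^\I)$ since $h_j\subseteq h'_j$; hence the formula is equivalent to $h\teq\set{(\tterm^\I,\uterm^\I)}\wedge\tterm^\I\not\teq\nil^\I$, i.e.\ to $\I,h\models_{\seplog}\tterm\mapsto\uterm$. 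When $\varphi$ is a $\Sigma$-formula the labeled formula is $\varphi$, whose value under the extension equals $\I\models_T\varphi$, which is $\I,h\models_{\seplog}\varphi$. Conjunction is immediate componentwise. For $\varphi\equiv\neg\phi$, the labeled formula is $\neg\labd{\phi \lab [\tup\lterm,\tup\pto]}$; by locality for $\phi$ its value is the same for every $(\tup{h},\tup{h'})$, hence equal — by the equivalence for $\phi$ — to the negation of $\I,h\models_{\seplog}\phi$, so ``for all $(\tup{h},\tup{h'})$ the extension satisfies $\neg\labd{\phi \lab [\tup\lterm,\tup\pto]}$'' holds iff $\I,h\models_{\seplog}\neg\phi$. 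Locality of the result is evident from each reduction.

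For $\varphi\equiv\phi*\psi$, after removing the double negation the labeled formula is an existential over two set variables, renamed $\lterm_a,\lterm_b$, asserting $\lterm_a\cap\lterm_b\teq\emptyset$, $\lterm_a\cup\lterm_b\teq\bigcup\tup\lterm$, and the subformulas $\labd{\phi \lab [\lterm_a\cap\tup\lterm,\tup\pto]}$ and $\labd{\psi \lab [\lterm_b\cap\tup\lterm,\tup\pto]}$. Fixing any $(\tup{h},\tup{h'})$ and any assignment $\lterm_a\mapsto D_a$, $\lterm_b\mapsto D_b$ with $D_a\uplus D_b=\dom(h)$, and writing $h_a,h_b$ for the restrictions of $h$ to $D_a,D_b$, the tuple $\lterm_a\cap\tup\lterm$ evaluates to the domains of the decomposition of $h_a$ induced by $\tup{h}$ (with the extensions still $\tup{h'}$), and similarly for $\lterm_b$; so the induction hypothesis (equivalence plus locality) for $\phi$ and $\psi$ makes the body equivalent to $\I,h_a\models_{\seplog}\phi$ and $\I,h_b\models_{\seplog}\psi$. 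Since $(h_a,h_b)$ ranges over all decompositions $h=h_a\uplus h_b$ as $(D_a,D_b)$ ranges over partitions of $\dom(h)$, the existential holds iff $\I,h\models_{\seplog}\phi*\psi$, and, being phrased through $\dom(h)$ and $h$ only, it is local.

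The hard part will be $\varphi\equiv\phi\wand\psi$, where $\labd{(\phi\wand\psi) \lab [\tup\lterm,\tup\pto]}$ is $\forall\lterm'\forall\pto'.\,(\lterm'\cap\bigcup\tup\lterm\teq\emptyset\wedge\labd{\phi \lab [\lterm',\pto']})\Rightarrow\labd{\psi \lab [\lterm'\cdot\tup\lterm,\pto'\cdot\tup\pto]}$, and one must align the second-order quantification over a \emph{set} $\lterm'$ together with a \emph{total function} $\pto'$ with the $\seplog$ quantification over \emph{disjoint heaps}. The key observation: instantiating $\lterm'\mapsto D'$ with $D'\cap\dom(h)\teq\emptyset$ and $\pto'\mapsto F'$ a total function, and letting $g$ be the restriction of $F'$ to $D'$, the induction hypothesis for $\phi$ (tuple length $1$) makes the antecedent conjunct hold iff $\I,g\models_{\seplog}\phi$ — and a universally quantified total $\pto'$ runs precisely over all extensions of $g$, which is exactly what forces the ``$h_i\subseteq h'_i$'' clause into the statement — while the induction hypothesis for $\psi$, applied to the length-$(n+1)$ tuples $\tuple{\lterm'}\cdot\tup\lterm$ and $\tuple{\pto'}\cdot\tup\pto$ (which under the current assignment evaluate to the domains of the decomposition $\tuple{g,h_1,\ldots,h_n}$ of $g\uplus h$ with extensions $\tuple{F',h'_1,\ldots,h'_n}$), makes the consequent hold iff $\I,g\uplus h\models_{\seplog}\psi$. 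As $g$ ranges over exactly the heaps disjoint from $h$, the formula holds iff $\I,h'\models_{\seplog}\phi$ implies $\I,h'\uplus h\models_{\seplog}\psi$ for every $h'\#h$, i.e.\ iff $\I,h\models_{\seplog}\phi\wand\psi$, and locality is again preserved since only $\dom(h)$ and $h$ were touched. Besides this step, the other thing that needs care is keeping the locality invariant alive uniformly across all cases, since it is what underpins both negation and the collapse of the outer quantifier.
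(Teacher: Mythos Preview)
Your proof follows the same structural-induction strategy as the paper's: both argue case by case on $\varphi$, and your treatments of $*$ and $\wand$ match the paper's in substance (the paper writes out essentially the same decomposition-and-restriction arguments for these two connectives, then dismisses $\wedge$, $\neg$, and $\Sigma$-formulae as ``an easy exercise'').

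The one genuine difference is your explicit \emph{locality} invariant. The paper never states it, yet---as you correctly diagnose---it is exactly what the $\neg\phi$ case needs: the inductive hypothesis, phrased as ``$\I,h\models_{\seplog}\phi$ iff \emph{for all} admissible $(\tup{h},\tup{h'})$ the labeled formula holds,'' does not by itself yield ``$\I,h\not\models_{\seplog}\phi$ iff \emph{for all} admissible $(\tup{h},\tup{h'})$ the labeled formula fails,'' because a universal quantifier does not commute with negation. Your locality lemma closes that gap (and, as you note, also cleanly absorbs the extra $\forall\pto'$ in the $\wand$ case). So while the overall route is the same, your version is the more careful one on this point; the paper's ``easy exercise'' for negation is, strictly speaking, underjustified without the observation you supply.
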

Although, in principle, satisfiability is undecidable in the presence
of quantifiers and uninterpreted functions, the result of the next
section strenghtens this reduction, by adapting the labeling rules for
$*$ and $\wand$ (Figure \ref{fig:labeling-rules}) to use bounded
quantification over finite (set) domains.

\section{A Reduction of $\seplogl{T}$ to Quantifiers Over Bounded Sets} 
\label{sec:boundq}

In the previous section, we have reduced any instance of the
$(\seplog,T)$-satisfiability problem to an instance of the
$T$-satisfiability problem in the second-order multi-sorted theory $T$
which subsumes the theory $\sets(\locs)$ and contains several
quantified uninterpreted function symbols of sort $\locs \mapsto
\data$. A crucial point in the translation is that the only
quantifiers occurring in $T$ are of the forms $\forall \lterm$ and
$\forall \pto$, where $\lterm$ is a variable of sort $\sets(\locs)$
and $\pto$ is a function symbol of sort $\locs \mapsto
\data$. Leveraging from a small model property for $\seplog$ over the
data domain $\data = \locs \times \locs$ \cite{CalcagnoYangOHearn01},
we show that it is sufficient to consider only the case when the
quantified variables range over a bounded domain of sets. In
principle, this allows us to eliminate the universal quantifiers by
replacing them with finite conjunctions and obtain a decidability
result based on the fact that the quantifier-free theory $T$ with sets
and uninterpreted functions is decidable. Since the cost of a-priori
quantifier elimination is, in general, prohibitive, in the next
section we develop an efficient lazy quantifier instantiation
procedure, based on counterexample-driven refinement.

For reasons of self-containment, we quote the following lemma
\cite{YangPhD} and stress the fact that its proof is oblivious of the
assumption $\data=\locs\times\locs$ on the range of heaps. Given a
formula $\phi$ in the language $\seplog(T)$, we first define the
following measure:
\[\begin{array}{ccccccc}
\len{\phi*\psi} = \len{\phi}+\len{\psi} & \hspace*{1cm} & 
\len{\phi \wand \psi} = \len{\psi} & \hspace*{1cm} & 
\len{\phi\wedge\psi} = \max(\len{\phi},\len{\psi}) & \hspace*{1cm} & 
\len{\neg\phi} = \len{\phi} \\
\len{\tterm \mapsto \uterm} = 1 & \hspace*{1cm} & 
\len{\emp} = 1 & \hspace*{1cm} & 
\len{\phi} = 0 \text{ if $\phi$ is a $\Sigma$-formula}
\end{array}\]
Intuitively, $\len{\phi}$ gives the maximum number of \emph{invisible}
locations in the domain of a heap $h$, that are not in the range of
$\I$ and which can be distinguished by $\phi$. For instance, if $\I,h
\models_{\seplog} (\neg \emp) * (\neg \emp)$ and the domain of $h$
contains more than two locations, then it is possible to restrict
$\dom(h)$ to two locations only, and obtain $h'$ such that
$\card{\dom(h')}=\len{(\neg \emp) * (\neg \emp)}=2$ and $\I,h'
\models_{\seplog} (\neg \emp) * (\neg \emp)$.

Let $\terms{\phi}$ be the set of terms (of sort $\locs \cup \data$)
that occur on the left- or right-hand side of a points-to atomic
proposition in $\phi$. Formally, we have $\terms{\tterm \mapsto
  \uterm}=\set{\tterm,\uterm}$,
$\terms{\phi*\psi}=\terms{\phi\wand\psi}=\terms{\phi}\cup\terms{\psi}$,
$\terms{\neg\phi}=\terms{\phi}$ and
$\terms{\emp}=\terms{\phi}=\emptyset$, for a $\Sigma$-formula
$\phi$. The small model property is given by the next lemma: 

\begin{lemma}\cite[Proposition 96]{YangPhD}\label{lemma:decidability}
Given a formula $\phi \in \seplog(T)$, for any interpretation $\I$ of
$T$, let \(L \subseteq \locs^\I \setminus \terms{\phi}^\I\) be a set
of locations, such that $\card{L}=\len{\phi}$ and $v \in \data^\I
\setminus \terms{\phi}^I$. Then, for any heap $h$, we have $\I,h
\models_\seplog \phi$ iff $\I,h' \models_\seplog \phi$, for any heap
$h'$ such that: \begin{compactitem}[-]
\item $\dom(h') \subseteq L \cup \terms{\phi}^\I$, 
\item for all $\ell \in \dom(h')$, $h'(\ell) \in \terms{\phi}^\I \cup
  \set{v}$
\end{compactitem}
\end{lemma}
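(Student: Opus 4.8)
The plan is to prove Lemma~\ref{lemma:decidability} by structural induction on $\phi$, establishing a slightly stronger statement that gives enough control to push the induction through the separating connectives. The key difficulty is that the statement, as phrased, fixes a single ``pool'' $L$ of fresh locations of size $\len{\phi}$ and a single fresh datum $v$, but when we recurse into $\phi * \psi$ we must split the pool between the two conjuncts, and when we recurse into $\phi \wand \psi$ we must reason about heaps $h'$ that are disjoint from $h$ and whose fresh locations must come from \emph{outside} $L \cup \terms{\phi}^\I$ in order not to interfere with the locations already used by $h$. So the right inductive invariant should be: for any finite set $T_0 \supseteq \terms{\phi}^\I$ of ``protected'' terms, any set $L$ of locations disjoint from $T_0$ with $\card{L} = \len{\phi}$, and any fresh datum $v \notin T_0$, if $\I,h \models_\seplog \phi$ then there is $h'$ with $\I,h' \models_\seplog \phi$, $\dom(h') \subseteq L \cup T_0$, and $h'(\ell) \in T_0 \cup \set{v}$ for every $\ell \in \dom(h')$; and symmetrically for shrinking the domain. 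This generalization is what makes the $*$ and $\wand$ cases go through.

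The base cases are immediate: for $\tterm \mapsto \uterm$ and $\emp$ the heap is already determined up to the interpretation of the named terms, $\len{\cdot} = 1$ or the relevant heap is empty, and for a pure $\Sigma$-formula $\phi$ the satisfaction does not depend on $h$ at all ($\len{\phi}=0$). For the Boolean cases, $\neg\phi$ is symmetric in the ``iff'' (this is exactly why the lemma is stated as an equivalence rather than a one-directional implication, and why we prove both the shrinking and, implicitly, the consistency of the transformation), and for $\phi \wedge \psi$ we use that $\len{\phi \wedge \psi} = \max(\len{\phi},\len{\psi})$, so a pool adequate for the larger conjunct is adequate for both; here we apply the induction hypothesis to one conjunct and then, crucially, observe that the resulting $h'$ still satisfies the other conjunct because the IH applied in the ``shrinking'' direction only removes locations, and we re-invoke the IH in the other direction to recover it — this is the step where having the equivalence in the inductive statement pays off.

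The main work is in the $*$ and $\wand$ cases. For $\phi * \psi$: given $h = h_1 \uplus h_2$ with $\I,h_i \models_\seplog \phi_i$, partition the pool $L$ into $L_1 \uplus L_2$ with $\card{L_i} = \len{\phi_i}$ (possible since $\len{\phi*\psi} = \len{\phi}+\len{\psi}$), apply the IH to $h_1$ with pool $L_1$ and protected set $T_0$, then to $h_2$ with pool $L_2$ and protected set $T_0$; the resulting $h'_1, h'_2$ have disjoint domains (contained in $L_1 \cup T_0$ and $L_2 \cup T_0$ respectively — wait, these overlap in $T_0$), so one must be slightly more careful: apply the IH to $h_1$ first to get $h'_1$, then enlarge the protected set to $T_0 \cup \dom(h'_1)$ when applying the IH to $h_2$, shrinking $L_2$ if necessary, or simply observe that since $\dom(h_1) \cap \dom(h_2) = \emptyset$ and both shrunk domains lie inside their respective pools plus $T_0$, the named locations they can touch in $T_0$ are controlled and one can keep $h'_1$ and $h'_2$ disjoint by using the freedom in choosing $h'_2$. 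For $\phi \wand \psi$: here $\len{\phi \wand \psi} = \len{\psi}$, and we must show $\I,h \models h'' \# h$ and $\I,h'' \models \phi$ implies $\I, h \uplus h'' \models \psi$ is preserved under replacing $h$ by a small $h'$; the point is that the universally quantified $h''$ ranges over \emph{all} disjoint heaps, so we cannot shrink it — instead we use the IH on $\psi$ with the \emph{enlarged} protected set $T_0 \cup \dom(h'') \cup \mathrm{ran}(h'')$ to shrink $h \uplus h''$, but this doesn't directly work since $h''$ is arbitrary. The correct argument (this is Yang's) is subtler: one shows that whether $h$ witnesses $\phi \wand \psi$ depends only on the restriction of $h$ to a bounded set, by a counting/pigeonhole argument on the possible ``shapes'' of extensions $h''$ up to the equivalence induced by the finitely many terms in $\psi$. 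I expect the $\wand$ case to be the main obstacle, precisely because its semantics quantifies universally over extensions and the naive inductive strategy of ``replace everything fresh by $L$ and $v$'' must be applied to the extension too, which requires showing that it suffices to consider extensions drawn from a bounded pool of fresh locations disjoint from those used for $h$ — and this is exactly where the generality of the protected set $T_0$ in the inductive invariant is essential.
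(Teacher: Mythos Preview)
The paper does not prove this lemma at all: it is quoted verbatim from Yang's thesis (cited as \cite[Proposition 96]{YangPhD}) and used as a black box. So there is no ``paper's own proof'' to compare your proposal against. The only hint the paper gives about how the original argument goes is that it separately quotes \cite[Proposition 89]{YangPhD} (Proposition~\ref{prop:bounded-wand} in the paper), which isolates exactly the $\wand$ case: it says that to check $\I,h \models_\seplog \phi \wand \psi$ it suffices to test extensions $h'$ whose domain is contained in $L \cup \terms{\phi\wand\psi}^\I$ with $\card{L}=\max(\len{\phi},\len{\psi})$ and whose range is contained in $\terms{\phi\wand\psi}^\I \cup \set{v}$. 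This confirms your intuition that the $\wand$ case is handled by a separate bounded-extension lemma rather than by a direct structural recursion on the given heap.

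On the substance of your sketch: the overall strategy (structural induction with a strengthened invariant carrying a ``protected'' set of terms) is the standard one and is almost certainly what Yang does. Two places in your outline are still genuinely incomplete, and you flag them yourself. In the $*$ case, the disjointness of $h'_1$ and $h'_2$ on the $T_0$-part does need an argument; the clean fix is to shrink $h_1$ first, then shrink $h_2$ relative to the enlarged protected set $T_0 \cup \dom(h'_1)$, using that $\len{\psi}$ fresh locations are still available in $L_2$. In the $\wand$ case, the naive ``shrink $h$ inside the universal'' does not work, as you note; what is actually needed is the converse direction --- restrict the universally quantified extension $h''$ to a bounded pool disjoint from the one used for $h$ --- and that is precisely the content of Proposition~\ref{prop:bounded-wand}. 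Your final paragraph gestures at this but does not carry it out; if you want a self-contained proof you will have to prove that auxiliary proposition as well, and that is where the real combinatorics (the indistinguishability-of-fresh-locations argument) lives.
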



Based on the fact that the proof of Lemma \ref{lemma:decidability}
\cite{YangPhD} does not involve reasoning about data values, other
than equality checking, we refine our reduction from the previous
section, by bounding the quantifiers to finite sets of constants of
known size. To this end, we assume the existence of a total order on
the (countable) set of constants in $\Sigma$ of sort $\locs$, disjoint
from any $\Sigma$-terms that occur in a given formula $\phi$, and
define $\bounds{\phi}{C} = \{c_{m+1}, \ldots, c_{m+\len{\phi}}\}$,
where $m=\max\{i \mid c_i \in C\}$, and $m=0$ if
$C=\emptyset$. Clearly, we have
$\terms{\phi}\cap\bounds{\phi}{C}=\emptyset$ and also $C \cap
\bounds{\phi}{C} = \emptyset$, for any $C$ and any $\phi$.

We now consider labeling judgements of the form $\varphi \lab
[\tup\lterm,\tup\pto,C]$, where $C$ is a finite set of constants of
sort $\locs$, and modify all the rules in Figure
\ref{fig:labeling-rules}, besides the ones with premises $(\phi *
\psi) \lab [\tup\lterm,\tup\pto]$ and $(\phi \wand \psi) \lab
    [\tup\lterm,\tup\pto]$, by replacing any judgement $\varphi \lab
    [\tup\lterm,\tup\pto]$ with $\varphi \lab
    [\tup\lterm,\tup\pto,C]$. The two rules in Figure
    \ref{fig:bounded-labeling} are the bounded-quantifier equivalents
    of the $(\phi * \psi) \lab [\tup\lterm,\tup\pto]$ and $(\phi \wand
    \psi) \lab [\tup\lterm,\tup\pto]$ rules in Figure
    \ref{fig:labeling-rules}. As usual, we denote by $\labd{(\varphi
      \lab [\tup\lterm,\tup\pto,C])}$ the formula obtained by
    exhaustively applying the new labeling rules to $\varphi \lab
    [\tup\lterm,\tup\pto,C]$.

Observe that the result of the labeling process is a formula in which
all quantifiers are of the form $\forall \lterm_1 \ldots \forall
\lterm_n \forall \pto_1 \ldots \forall \pto_n. \bigwedge_{i=1}^n
\lterm_i \subseteq L_i \wedge \bigwedge_{i=1}^n \pto_i \subseteq L_i
\times D_i \Rightarrow \psi(\tup\lterm,\tup\pto)$, where $L_i$'s and
$D_i$'s are finite sets of terms, none of which involves quantified
variables, and $\psi$ is a formula in the theory $T$ with sets and
uninterpreted functions. Moreover, the labeling rule for $\phi \wand
\psi \lab [\tup\lterm,\tup\pto,C]$ uses a fresh constant $d$ that does
not occur in $\phi$ or $\psi$.

\begin{figure}[htb]
\[\begin{array}{c}
\infer[]{\begin{array}{lcl}
    \neg\forall \lterm_1 \forall \lterm_2 & . & 
    \lterm_1 \cup \lterm_2 \subseteq C \cup \terms{\phi*\psi} \Rightarrow \\ 
    && \neg(\lterm_1 \cap \lterm_2 \teq \emptyset \wedge \lterm_1 \cup \lterm_2 \teq \bigcup\tup\lterm \wedge
    \phi\lab[\lterm_1\cap\tup\lterm,\tup\pto,C] \wedge \psi\lab[\lterm_2\cap\tup\lterm,\tup\pto,C])
    \end{array}}{\phi*\psi\lab[\tup\lterm,\tup\pto,C]} 
\\\\ 
\infer[\begin{array}{r}
C' = \bounds{\phi\wedge\psi}{C} \\
d \not\in \terms{\phi \wand \psi} 
\end{array}]{\begin{array}{lcl} 
    \forall \lterm' \forall \pto' & . & \lterm' \subseteq C' \cup \terms{\phi \wand \psi} ~\wedge  \\
    && \pto' \subseteq (C' \cup \terms{\phi \wand \psi}) \times (\terms{\phi \wand \psi}\cup\set{d}) \Rightarrow \\
    && ( \lterm'\cap(\bigcup\tup\lterm) \teq \emptyset ~\wedge~ \phi \lab [\lterm',\pto',C'] ) \Rightarrow \psi \lab [\lterm'\cdot\tup\lterm,\pto'\cdot\tup\pto,C]
\end{array}}{\phi \wand \psi \lab [\tup\lterm,\tup\pto,C]}
\end{array}\]
\caption{Bounded Quantifier Labeling Rules }
\label{fig:bounded-labeling}
\end{figure}


\begin{example}\label{ex:bounded-labeling2}
We revisit below the labeling of the formula $(x \mapsto a \wand y \mapsto b)
\wedge \emp$: 
\[\begin{array}{c}
((x \mapsto a \wand y \mapsto b) \wedge \emp) \lab [\lterm,\pto,C] 
\Longrightarrow^* \\
\lterm\teq \emptyset \wedge \forall \lterm' \subseteq \set{x,y,a,b,c} \forall \pto' \subseteq \set{x,y,a,b,c} \times \set{x,y,a,b,d}) ~.~ \\
\lterm'\cap\lterm\teq \emptyset \wedge \lterm'\teq \set{x} \wedge \ite{x\in\lterm'}{\pto'(x)\teq a}{\top} \wedge x\not\teq\nil \Rightarrow \\
\lterm'\cup\lterm\teq \set{y} \wedge \ite{y\in\lterm'}{\pto'(y)\teq b}{\ite{y\in\lterm}{\pto(y)\teq b}{\top}} \wedge y\not\teq\nil \enspace.\\[-2mm]
\end{array}\]
where $\terms{(x \mapsto a \wand y \mapsto b) \wedge \emp} =
\set{x,y,a,b}$. Observe that the constant $c$ was introduced by the
bounded quantifier labeling of the term $x \mapsto a \wand y \mapsto
b$. \hfill$\blacksquare$
\end{example}

The next lemma states the soundness of the translation of $\seplog(T)$
formulae in a fragment of $T$ that contains only bounded quantifiers,
by means of the rules in Figure \ref{fig:bounded-labeling}. 

\begin{lemma}\label{lemma:bounded-labeling}
  Given a formula $\varphi$ in the language $\seplog(T)$, for any
  interpretation $\I$ of $T$, let $L \subseteq \locs^\I \setminus
  \terms{\varphi}^\I$ be a set of locations such that
  $\card{L}=\len{\varphi}$ and $v \in \data^\I \setminus
  \terms{\varphi}^\I$ be a data value. Then there exists a heap $h$
  such that $\I,h \models_\seplog \varphi$ iff there exist heaps
  $\tup{h'}=\tuple{h'_1, \ldots, h'_n}$ and
  $\tup{h''}=\tuple{h''_1,\ldots,h''_n}$ such
  that: \begin{compactenum}
  \item for all $1 \leq i < j \leq n$, we have $h'_i \# h'_j$,
  \item for all $1 \leq i \leq n$, we have $h'_i \subseteq h''_i$ and
  \item \(\I[\tup\lterm \leftarrow \dom(\tup{h'})][\tup\pto \leftarrow
    \tup{h''}][C \leftarrow L][d \leftarrow v] \models_T \labd{\varphi
    \lab [\tup\lterm,\tup\pto,C]}\enspace.\)
  \end{compactenum}
\end{lemma}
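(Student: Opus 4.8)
The plan is to reduce Lemma~\ref{lemma:bounded-labeling} to the combination of Lemma~\ref{lemma:labeling} and the small model property of Lemma~\ref{lemma:decidability}. First I would observe that the bounded labeling rules of Figure~\ref{fig:bounded-labeling} are obtained from the unbounded rules of Figure~\ref{fig:labeling-rules} by adding, to the $*$-rule, the constraint $\lterm_1 \cup \lterm_2 \subseteq C \cup \terms{\phi*\psi}$ and, to the $\wand$-rule, the constraints $\lterm' \subseteq C' \cup \terms{\phi\wand\psi}$ and $\pto' \subseteq (C' \cup \terms{\phi\wand\psi}) \times (\terms{\phi\wand\psi}\cup\set{d})$ where $C' = \bounds{\phi\wedge\psi}{C}$. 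Thus $\labd{\varphi \lab [\tup\lterm,\tup\pto,C]}$ is obtained from $\labd{\varphi \lab [\tup\lterm,\tup\pto]}$ by conjoining these bounding literals under the appropriate quantifiers. The direction from bounded satisfaction to $\seplog$ satisfaction is then the easy one: if $\I[\tup\lterm \leftarrow \dom(\tup{h'})][\tup\pto \leftarrow \tup{h''}][C \leftarrow L][d \leftarrow v]$ satisfies the bounded formula, then in particular it satisfies every instance of the unbounded formula in which the witnesses for $\lterm_1,\lterm_2$ (resp.\ $\lterm',\pto'$) happen to satisfy the bounding literals — but since those literals also appear as \emph{hypotheses} under the quantifiers in the bounded $\wand$-rule and as a negated guard in the bounded $*$-rule, one has to check that dropping them is sound in the direction that matters. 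Concretely, I would argue that the bounded $*$-formula $T$-entails the unbounded one restricted to the small heaps, and the unbounded $\wand$-formula restricted to small extension heaps $T$-entails the bounded one; combining with Lemma~\ref{lemma:labeling} and Lemma~\ref{lemma:decidability} closes both directions.

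For the forward direction, suppose $\I,h \models_\seplog \varphi$ for some heap $h$. By Lemma~\ref{lemma:decidability}, there is a heap $h_0$ with $\dom(h_0) \subseteq L \cup \terms{\varphi}^\I$ and $h_0(\ell) \in \terms{\varphi}^\I \cup \set{v}$ for all $\ell \in \dom(h_0)$, and still $\I,h_0 \models_\seplog \varphi$. Now I would split $h_0$ and its extensions following the inductive structure of the labeling derivation: the $\uplus$-decompositions needed in clause~(1) and the $\subseteq$-extensions needed in clause~(2) of Lemma~\ref{lemma:labeling} can all be chosen to stay within the bounded universes $C \cup \terms{\cdot}$ by one further application of the small model property at each $*$ and $\wand$ node — this is exactly what the set $\bounds{\phi\wedge\psi}{C}$ is sized for, since it supplies $\len{\phi\wedge\psi}$ fresh constants for the invisible locations introduced when instantiating a $\wand$. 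Conversely, for the backward direction, given the heaps $\tup{h'},\tup{h''}$ as in the statement, I would set $h = h'_1 \uplus \cdots \uplus h'_n$, note that $\dom(\tup{h'}) \subseteq \mathcal P(L \cup \terms{\varphi}^\I)$ by clause~(1) and $h''_i \subseteq (L\cup\ldots)\times(\ldots\cup\set v)$, verify the hypotheses of Lemma~\ref{lemma:labeling} (the bounding literals are \emph{weaker} than the full set/function quantification, so a model of the bounded formula need not immediately be a model of the unbounded one — here I would instead re-derive $\I,h\models_\seplog\varphi$ directly by induction on $\varphi$, reading the bounded rules as the semantic clauses restricted to the small universes, and appeal to Lemma~\ref{lemma:decidability} to lift the restricted clauses back to the unrestricted semantics).

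The main obstacle I anticipate is the $\wand$ case, and specifically the freshness/renaming bookkeeping for the constant $d$ and the nested bound $C' = \bounds{\phi\wedge\psi}{C}$. When instantiating $\phi\wand\psi$ one quantifies over a \emph{new} slice $\lterm'$ and a new points-to function $\pto'$ describing a heap $h'$ disjoint from the current one; the semantics of $\wand$ ranges over \emph{all} such $h'$, but Lemma~\ref{lemma:decidability} only lets us restrict the heap we are evaluating, not an arbitrary universally-quantified extension. The resolution is that $\psi$ is evaluated on $h' \uplus h$, so we may apply the small model property to $\psi$ (with term set $\terms{\phi\wand\psi}$ and the fresh data value $v$ assigned to $d$) to restrict $\dom(h' \uplus h)$, hence also the relevant part of $\dom(h')$, to $L' \cup \terms{\phi\wand\psi}^\I$ where $\card{L'} = \len{\psi} \le \len{\phi\wedge\psi}$; the part of $h'$ outside $\dom(\psi$'s restricted heap$)$ can be taken empty without affecting $\psi$, so quantifying $\lterm'$ over $C' \cup \terms{\phi\wand\psi}$ loses nothing. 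I would present this as the key lemma of the induction, with the $*$ and boolean cases being routine, and the atomic cases immediate from the definitions.
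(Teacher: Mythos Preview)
Your overall strategy---restrict to a small heap via Lemma~\ref{lemma:decidability}, then run a structural induction mirroring Lemma~\ref{lemma:labeling} and check that at each rule the quantified witnesses stay inside the prescribed finite universes---is exactly the paper's approach. The paper also first passes to a small $h$ and then proves a strengthened ``for all small $h$'' equivalence by induction on $\varphi$, case-splitting as in Lemma~\ref{lemma:labeling}.

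There is, however, a real gap in your handling of the $\wand$ case. You propose to apply Lemma~\ref{lemma:decidability} to $\psi$ on the heap $h' \uplus h$, shrink $\dom(h'\uplus h)$ to the small universe, and declare that ``the part of $h'$ outside \ldots\ can be taken empty without affecting $\psi$''. But you also need the shrunk $h'_0$ to still satisfy $\I,h'_0 \models_\seplog \phi$; discarding cells of $h'$ that are irrelevant to $\psi$ may very well destroy $\phi$. So shrinking $h'\uplus h$ via the small-model property of $\psi$ alone does not yield a small extension heap that is still a $\phi$-heap, and hence does not justify restricting the universal quantifier in the bounded $\wand$-rule. This is also why the bound is $\len{\phi\wedge\psi}=\max(\len\phi,\len\psi)$ rather than the $\len\psi$ your argument would give: the extension must be large enough for the invisible locations of \emph{both} $\phi$ and $\psi$ simultaneously.

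The paper does not try to bootstrap this from Lemma~\ref{lemma:decidability}; instead it invokes a separate auxiliary fact, Proposition~\ref{prop:bounded-wand} (Proposition~89 of~\cite{YangPhD}): for any $h$, one has $\I,h \models_\seplog \phi\wand\psi$ iff the same holds when the universally quantified $h'$ is restricted to $\dom(h') \subseteq L \cup \terms{\phi\wand\psi}^\I$ and range in $\terms{\phi\wand\psi}^\I \cup \{v\}$, with $\card{L}=\max(\len\phi,\len\psi)$. This is precisely the statement your induction step needs, and its proof requires building a single small $h'_0$ that simultaneously satisfies $\phi$ and preserves the truth value of $\psi$ on $h\uplus h'_0$, which is more delicate than either application of Lemma~\ref{lemma:decidability} in isolation. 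To fix your argument, either invoke this proposition directly or isolate and prove it as the ``key lemma'' before running the induction.
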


\section{A Counterexample-Guided Approach for Solving $\seplogl{T}$ Inputs}
\label{sec:solve-ceg}

This section presents a novel decision procedure for the $(\seplog,
T)$-satisfiability of the set of quantifier-free $\seplogl{T}$ formulae $\varphi$.  To this
end, we present an efficient decision procedure for the
$T$-satisfiability of $\labd{(\varphi \lab [\lterm,\pto,C])}$, obtained as
the result of the transformation described in
Section~\ref{sec:boundq}. The main challenge in doing so is treating
the universal quantification occurring in $\labd{(\varphi \lab [\lterm,\pto,C])}$.
As mentioned, the key to decidability is that all quantified formulae
in $\labd{(\varphi \lab [\lterm,\pto,C])}$ are equivalent to formulas
of the form $\forall \vec \qvar. ( \bigwedge \vec \qvar \subseteq \vec
s ) \Rightarrow \varphi$, where each term in the tuple $\vec s$ is a
finite set (or product of sets) of ground $\Sigma$-terms. For brevity,
we write $\forall \vec \qvar \subseteq \vec s. \varphi$ to denote a
quantified formula of this form.  While such formulae are clearly
equivalent to a finite conjunction of instances, the cost of
constructing these instances is in practice prohibitively expensive.
Following recent approaches for handling universal
quantification~\cite{GeDeM-CAV-09,ReynoldsDKBT15Cav,bjornerplaying,RKK2015},
we use a counterexample-guided approach for choosing instances of
quantified formulae that are relevant to the satisfiability of our
input.  The approach is based on an iterative procedure maintaining an
evolving set of quantifier-free $\Sigma$-formulae $\Gamma$, which is
initially a set of formulae obtained from $\varphi$ by a purification
step, described next.

We associate each closed quantified formula a boolean variable $A$,
called the \emph{guard} of $\forall \vec \qvar. \varphi$, and a
(unique) set of Skolem symbols $\vec k$ of the same sort as $\vec
\qvar$.  We write $(A, \vec k ) \leftrightharpoons \forall \vec
\qvar. \varphi$ to denote that $A$ and $\vec k$ are
associated with $\forall \vec \qvar. \varphi$.
For a set of formulae $\Gamma$, we write $\quants{\Gamma}$ to
denote the set of quantified formulae whose guard occurs within a formula in $\Gamma$.
We write $\purify{ \psi }$ for the result of replacing in $\psi$ all
closed quantified formulae (not occurring beneath other quantifiers in $\psi$) with their corresponding guards.
Conversely, we write $\unpurify{\Gamma}$ to denote the result of replacing
all guards in $\Gamma$ by the quantified formulae they are associated with. 
We write $\purifyrec{ \psi }$ denote the (smallest) set of $\Sigma$-formulae where:

\[\begin{array}{r@{\hspace{1em}}r}
\purify{\psi} \in \purifyrec{ \psi } \\[.2ex]
( \neg A \Rightarrow \purify{ \neg \varphi[ \vec k/\vec \qvar] } ) \in \purifyrec{ \psi } & \text{if } \forall \vec \qvar. \varphi \in \quants{ \purifyrec{ \psi } } 
\text{ where } ( A, \vec k ) \leftrightharpoons \forall \vec \qvar. \varphi. \\[.2ex]
\end{array}\]

\noindent
It is easy to see that if $\psi$ is a $\Sigma$-formula possibly containing quantifiers, then 
$\purifyrec{ \psi }$ is a set of quantifier-free $\Sigma$-formulae, and
if all quantified formulas in $\psi$ are of the form 
$\forall \vec \qvar \subseteq \vec s. \varphi$ mentioned above, 
then all quantified formulas in $\quants{ \purifyrec{ \psi } }$ are also of this form.

\begin{example}
If $\psi$ is the formula $\forall x. ( P( x ) \Rightarrow \neg \forall y. R( x, y ) )$,
then $\purifyrec{ \psi }$ is the set:
\[\begin{array}{l}
\{ A_1, \neg A_1 \Rightarrow \neg( P( k_1 ) \Rightarrow A_2 ), \neg A_2 \Rightarrow \neg R( k_1, k_2 ) \}
\end{array}\]
where $( A_1, k_1 ) \leftrightharpoons \forall x. ( P( x ) \Rightarrow \neg \forall y. R( x, y ) )$
and $( A_2, k_2 ) \leftrightharpoons \forall y. R( k_1, y )$.
\hfill$\blacksquare$
\end{example}

\begin{figure}[t]
\begin{framed}
$\solvesl{T}( \varphi )$:
\begin{enumerate}
\item[\ ] Let $C$ be a set of fresh constants of sort $\locs$ such that $\card{C}=\len{\varphi}$.
\item[\ ] Let $\lterm$ and $\pto$ be a fresh symbols of sort $\sets(\locs)$ and $\locs \Rightarrow \data$ respectively.
\item[\ ] Return $\solveslt{T}( \purifyrec{ \labd{(\varphi \lab [\lterm,\pto,C])} } )$.
\end{enumerate}
$\solveslt{T}( \Gamma )$:
\begin{enumerate}
\item If $\Gamma$ is $T$-unsatisfiable,
\begin{enumerate}
\item[\ ] return ``unsat",
\end{enumerate}
\item[\ ] else let $\I$ be a $T$-model of $\Gamma$.
\item If $\Gamma, A \models_T \purify{ \psi[ \vec k/\vec \qvar] }$ for all $\forall \vec \qvar. \psi \in \quants{\Gamma}$,
where $(A, \vec k) \leftrightharpoons \forall \vec \qvar. \psi$ and $A^\I = \ltrue$, 
\begin{enumerate}
\item[\ ] return ``sat",
\end{enumerate}
\item[\ ]\begin{tabular}{l@{\hspace{0.2em}}l@{\hspace{0.2em}}r}
else & let $\J$ be a $T$-model of $\Gamma \cup \{ A, \neg \purify{ \psi[ \vec k/\vec \qvar] } \}$ for some $\ord{\Gamma}{\I}$-minimal
 $\forall \vec \qvar \subseteq \vec s. \psi$, \\
 &  where $(A, \vec k ) \leftrightharpoons \forall \vec \qvar \subseteq \vec s. \psi$.
\end{tabular}
\item Let $\vec t$ be a vector of terms, such that $\vec t \subseteq \vec s$, and $\vec t^\J = \vec k^\J$.
\item[\ ] Return $\solveslt{T}$( $\Gamma \cup \purifyrec{ A \Rightarrow \psi[ \vec t/ \vec \qvar ]} )$.
\end{enumerate}
\vspace*{-2ex}
\end{framed}
\vspace*{-2ex}
\caption{Procedure $\solvesl{T}$ for deciding $(\seplog,T)$-satisfiability of $\seplogl{T}$ formula $\varphi$.
\label{fig:solve}}
\vspace*{-\baselineskip}
\end{figure}

Our algorithm $\solvesl{T}$
for determining the $(\seplog,T)$-satisfiability of input
$\varphi$ is given in Figure~\ref{fig:solve}.
It first constructs the set $C$ based on the value of $\len{\varphi}$,
which it computes by traversing the structure of $\varphi$.
It then invokes the subprocedure $\solveslt{T}$
on the set $\purifyrec{ \labd{(\varphi \lab [\lterm,\pto,C])} }$ where $\lterm$ and $\pto$ are fresh free symbols.  

At a high level, the recursive procedure $\solveslt{T}$ takes as input
a (quantifier-free) set of $T$-formulae $\Gamma$, where $\Gamma$ is
$T$-unsatisfiable if and only if $\labd{(\varphi \lab [\lterm,\pto,C])}$ is. 
On each invocation,
$\solveslt{T}$ will either\begin{inparaenum}[(i)]
\item terminate with ``unsat", in which case our
input $\varphi$ is $T$-unsatisfiable, 
\item terminate with ``sat", in which
case our input $\varphi$ is $T$-satisfiable, or 
\item add the set
corresponding to the purification of the instance $\purifyrec{ A
  \Rightarrow \psi[ \vec t/ \vec \qvar ]}$ to $\Gamma$ and repeats.
\end{inparaenum}

In more detail, in Step 1 of the procedure,
we determine the $T$-satisfiability of $\Gamma$
using a combination of a satisfiability solver and a decision procedure for $T$
\footnote{Non-constant Skolem symbols $k$ introduced by the procedure 
may be treated as uninterpreted functions.
Constraints of the form $k \subseteq S_1 \times S_2$ are translated to
$\bigwedge_{c \in S_1} k( c ) \in S_2$.
Furthermore, the domain of $k$ may be restricted to the set $\{ c^\I \mid c \in S_1 \}$ in models $\I$
found in Steps 1 and 2 of the procedure. 
This restriction comes with no loss of generality since,
by construction of $\labd{(\varphi \lab [\lterm,\pto,C])}$, 
$k$ is applied only to terms occurring in $S_1$.
}.
If $\Gamma$ is $T$-unsatisfiable, since $\Gamma$ is $T$-entailed by $\unpurify{\Gamma}$,
we may terminate with ``unsat".
Otherwise, there is a $T$-model $\I$ for $\Gamma$ and $T$.
In Step 2 of the procedure,
for each $A$ that is interpreted to be true by $\I$,
we check whether 
$\Gamma \cup \set{A}$ $T$-entails $\purify{ \psi[ \vec k/\vec \qvar] }$ for fresh free constants $\vec k$,
which can be accomplished by determining whether
$\Gamma \cup \{ A, \neg \purify{ \psi[ \vec k/\vec \qvar] } \}$ is $T$-unsatisfiable.
If this check succeeds for a quantified formula $\forall \vec \qvar. \psi$,
the algorithm has established that $\forall \vec \qvar. \psi$ is entailed by $\Gamma$.
If this check succeeds for all such quantified formulae, 
then $\Gamma$ is equivalent to $\unpurify{\Gamma}$, and
we may terminate with ``sat".
Otherwise, let $Q^+_\I( \Gamma )$ be the subset of $Q( \Gamma )$ 
for which this check did not succeed.
We call this the set of \emph{active quantified formulae} for $(\I, \Gamma)$.
We consider an active quantified formula that is minimal with respect to the 
relation $\ord{\Gamma}{\I}$ over $\quants{ \Gamma }$, where:
\[\begin{array}{r@{\hspace{1em}}r}
\varphi \ord{\Gamma}{\I} \psi & \text{if and only if } 
\varphi \in \quants{ \purifyrec{ \psi } } \cap Q^+_\I( \Gamma ) \\
\end{array}\]

By this ordering, our approach considers innermost active quantified formulae first.
Let $\forall \vec \qvar. \psi$ be minimal with respect to $\ord{\Gamma}{\I}$,
where $(A, \vec k) \leftrightharpoons \forall \vec \qvar. \psi$.
Since $\Gamma, A$ does not $T$-entail $\purify{ \psi[ \vec k/\vec \qvar] }$,
there must exist a model $\J$ for
$\Gamma \cup \{ \purify{ \neg \psi[ \vec k/\vec \qvar] } \}$ where $A^\J = \top$.
In Step 3 of the procedure,
we choose a tuple of terms $\vec t = ( t_1, \ldots, t_n )$ based on the model $\J$,
and add to $\Gamma$ the set of formulae obtained by purifying $A \Rightarrow \psi[ \vec t/ \vec \qvar ]$,
where $A$ is the guard of $\forall \vec \qvar \subseteq \vec s. \psi$.
Assume that $\vec s = ( s_1, \ldots, s_n )$ and recall that each $s_i$ is a finite union of ground $\Sigma$-terms.
We choose each $\vec t$ such that $t_i$ is a subset of $s_i$ for each $i = 1, \ldots n$,
and $\vec t^\J = \vec k^\J$.  
These two criteria are the key to the termination of the algorithm:
the former ensures that only a finite number of possible instances can ever be added to $\Gamma$,
and the latter ensures that we never add the same instance more than once.


\begin{theorem}
\label{thm:solveslt}
For all $\seplogl{T}$ formulae $\varphi$,
$\solvesl{T}( \varphi )$:
\begin{enumerate}
\item \label{it:sltunsat} Answers ``unsat" only if $\varphi$ is $(\seplog,T)$-unsatisfiable.
\item \label{it:sltsat} Answers ``sat" only if $\varphi$ is $(\seplog,T)$-satisfiable.
\item \label{it:sltterm} Terminates.
\end{enumerate}
\end{theorem}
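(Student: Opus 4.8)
The plan is to prove the three claims by connecting the operational behaviour of $\solvesl{T}$ to the soundness results already established, namely Lemma~\ref{lemma:bounded-labeling} (and behind it Lemmas~\ref{lemma:labeling} and~\ref{lemma:decidability}), plus the general machinery of purification. The backbone of the argument is a single invariant: every set $\Gamma$ that arises during the recursion satisfies $\unpurify{\Gamma} \models_T \Gamma$ and, moreover, $\Gamma$ is $T$-unsatisfiable if and only if $\labd{(\varphi \lab [\lterm,\pto,C])}$ is. The initial call satisfies this because $\purifyrec{\cdot}$ is constructed precisely so that $\purifyrec{\psi}$ is equisatisfiable with $\psi$ (each guard $A$ is constrained by its defining implication $\neg A \Rightarrow \purify{\neg\varphi[\vec k/\vec\qvar]}$, so setting $A$ true forces the body and setting it false forces the negated body on Skolem witnesses). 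Each recursive step adds $\purifyrec{A \Rightarrow \psi[\vec t/\vec\qvar]}$, which is $T$-entailed by $\unpurify{\Gamma}$ (since $\unpurify{\Gamma}$ contains the universally quantified $A \Rightarrow \forall\vec\qvar.\psi$, hence every instance), so the invariant is preserved.

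For claim~\ref{it:sltunsat}, I would argue: if the procedure returns ``unsat'' in Step~1, then $\Gamma$ is $T$-unsatisfiable, hence by the invariant $\labd{(\varphi \lab [\lterm,\pto,C])}$ is $T$-unsatisfiable, and then by the ``only if'' (contrapositive) direction of Lemma~\ref{lemma:bounded-labeling} there is no heap $h$ with $\I,h \models_\seplog \varphi$ for any $\I$ — i.e.\ $\varphi$ is $(\seplog,T)$-unsatisfiable. For claim~\ref{it:sltsat}, if the procedure returns ``sat'' in Step~2 with model $\I$, then for every active-free quantified formula the entailment check succeeded, so $\Gamma$ is $T$-equivalent to $\unpurify{\Gamma}$; since $\I \models_T \Gamma$ and the entailments let us ``lift'' $\I$ to satisfy each universally quantified body, $\I \models_T \labd{(\varphi \lab [\lterm,\pto,C])}$. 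Reading off $L$, $v$, and the heaps $\tup{h'}, \tup{h''}$ from $\I$'s interpretation of $C$, $d$, $\tup\lterm$, $\tup\pto$ as in the statement of Lemma~\ref{lemma:bounded-labeling}, the ``if'' direction of that lemma yields a heap $h$ with $\I,h \models_\seplog \varphi$, so $\varphi$ is $(\seplog,T)$-satisfiable.

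For termination (claim~\ref{it:sltterm}) the key observation, already flagged in the text, is that all quantified formulae occurring anywhere in $\purifyrec{\labd{(\varphi \lab [\lterm,\pto,C])}}$ are of the bounded form $\forall\vec\qvar \subseteq \vec s.\,\psi$ where each component of $\vec s$ is a fixed finite union/product of ground $\Sigma$-terms, and this property is preserved by $\purifyrec{\cdot}$ applied to instances. Hence for each guard $A$ there is only a \emph{finite} set of tuples $\vec t$ with $t_i \subseteq s_i$ componentwise; Step~3's requirement $\vec t \subseteq \vec s$ confines every instance added to $\Gamma$ to this finite pool. The second requirement $\vec t^\J = \vec k^\J$, together with the fact that $\J$ was chosen to falsify $\purify{\psi[\vec k/\vec\qvar]}$ while every previously added instance $A \Rightarrow \psi[\vec t'/\vec\qvar]$ is in $\Gamma$ (hence satisfied by $\J$), guarantees $\vec t \neq \vec t'$ for every earlier $\vec t'$ — so each recursion step strictly enlarges a subset of a finite set, and $Q(\Gamma)$ itself grows only within a finite universe of quantified formulae. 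Therefore the recursion depth is bounded and the procedure terminates. I expect the main obstacle to be the termination argument's bookkeeping: one must carefully verify that the collection $\quants{\purifyrec{\labd{(\varphi \lab [\lterm,\pto,C])}}}$ is finite to begin with (the recursive closure in the definition of $\purifyrec{\cdot}$ could a priori diverge) and that newly generated instances do not introduce quantified subformulae outside this finite set — this hinges on the nesting structure of the bounded-quantifier labeling rules and the fact that $\purify{\cdot}$ only replaces \emph{outermost} closed quantified subformulae, so instantiating one guard exposes only its immediate quantified children, of which there are finitely many.
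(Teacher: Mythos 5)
Your overall strategy coincides with the paper's: maintain an invariant relating $\Gamma$ to $\labd{(\varphi\lab[\lterm,\pto,C])}$ through purification, conclude via the two directions of Lemma~\ref{lemma:bounded-labeling}, and prove termination by confining instances to a finite pool while showing no instance is generated twice. There is, however, a concrete gap in the two places where you reason about purified formulae as if they were the formulae they purify. In the termination step you argue that since $\J$ satisfies every previously added instance $A \Rightarrow \purify{\psi[\vec u/\vec\qvar]}$ but falsifies $\purify{\psi[\vec k/\vec\qvar]}$, we must have $\vec u^\J \neq \vec k^\J$. This inference does not go through as stated: $\purify{\psi[\vec u/\vec\qvar]}$ and $\purify{\psi[\vec k/\vec\qvar]}$ contain \emph{distinct} guard variables standing for their nested quantified subformulae, so $\J$ can satisfy one and falsify the other even when $\vec u^\J = \vec k^\J$, simply by assigning those two unconstrained guards different truth values. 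This is precisely why the procedure instantiates an $\ord{\Gamma}{\I}$-\emph{minimal} (innermost active) quantified formula: minimality means all inner quantified subformulae have already passed their entailment check, which pins every inner guard to the truth value of the formula it abbreviates in any model of $\Gamma$ (the paper's Lemma~\ref{lem:gamma-minimal}); only then may one replace the purified literals by $\psi[\vec u/\vec\qvar]$ and $\neg\psi[\vec k/\vec\qvar]$ themselves and run the functional argument yielding $\vec u^\J \neq \vec k^\J$. Your write-up never invokes minimality, so this step is unsupported.

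The same issue, in milder form, affects your ``sat'' case. To lift a model of $\Gamma$ to one of $\unpurify{\Gamma}$ you need that a successful check $\Gamma, A \models_T \purify{\psi[\vec k/\vec\qvar]}$ yields $\Gamma \models_T \forall\vec\qvar.\,\psi$. The usual ``$\vec k$ is fresh, so generalize'' argument does not apply directly, because $\vec k$ \emph{does} occur in $\Gamma$ --- in the clause $\neg A \Rightarrow \purify{\neg\psi[\vec k/\vec\qvar]}$ and in any instances generated for quantifiers nested below it; one must first partition $\Gamma$ into the part mentioning $\vec k$ and the rest and show the former can be discarded (the paper's Lemma~\ref{lem:gamma-minimal0}). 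By contrast, your finiteness bookkeeping for the instance pool, your worry about the finiteness of $\quants{\purifyrec{\cdot}}$ (which is indeed settled by the fact that purification only exposes immediate quantified children), and your use of Lemma~\ref{lemma:bounded-labeling} at the top level are all correct and match the paper.
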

By Theorem~\ref{thm:solveslt}, $\solvesl{T}$ is a decision procedure
for the $(\seplog,T)$-satisfiability of the language of
quantifier-free $\seplogl{T}$ formulae. The following corollary gives
a tight complexity bound for the $(\seplog,T)$-satisfiability problem.

\begin{corollary}\label{cor:slt-pspace}
  The $(\seplog,T)$-satisfiability problem is PSPACE-complete for any
  theory $T$ whose satisfiability (for the quantifier-free fragment)
  is in PSPACE. 
\end{corollary}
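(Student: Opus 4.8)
I would establish the two directions of the bound separately. For \textsc{PSPACE}-hardness it suffices to exhibit one theory $T$ satisfying the hypothesis for which $(\seplog,T)$-satisfiability is already \textsc{PSPACE}-hard: take $T$ to be the theory of equality over the (infinite) sort $\locs$, with $\data$ chosen to carry the pairing structure of \cite{CalcagnoYangOHearn01} (e.g.\ $\data = \locs\times\locs$, realized by a datatype sort or by two uninterpreted projection symbols). The quantifier-free fragment of this $T$ is decided by congruence closure in polynomial time, hence a fortiori in \textsc{PSPACE}; and, up to the evident encoding of pairs, $\seplogl{T}$ syntactically contains the quantifier-free Separation Logic shown \textsc{PSPACE}-hard in \cite{CalcagnoYangOHearn01}, so the reduction is the identity.

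For membership I would work from the small model property rather than from an analysis of $\solvesl{T}$, whose lazy instantiation need not run in polynomial space. Fix any $T$ whose quantifier-free satisfiability is in \textsc{PSPACE}, and let $\varphi$ be given. By Lemma~\ref{lemma:bounded-labeling}, $\varphi$ is $(\seplog,T)$-satisfiable iff the $T$-formula $\Phi \defequal \labd{(\varphi \lab [\lterm,\pto,C])} \wedge \Delta$ is $T$-satisfiable, where $\Delta$ is a quantifier-free constraint forcing the constants of $C$ to be pairwise distinct, distinct from $\terms{\varphi}$, and distinct from the fresh constants $d$ introduced by the $\wand$-rule (with each such $d$ distinct from $\terms{\varphi}$). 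The constraint $\Delta$ encodes the side conditions $L \subseteq \locs^\I \setminus \terms{\varphi}^\I$, $\card{L} = \len{\varphi}$, $v \notin \terms{\varphi}^\I$ of that lemma and is always realizable because $\locs^\I$ is infinite, so $\lterm,\pto,C$ and the $d$'s may be treated as free symbols. One then checks that $\Phi$ has size polynomial in $|\varphi|$ (the tuples $\tup\lterm,\tup\pto$ have length bounded by the $\wand$-nesting depth, the nested if-then-else terms have size linear in that length, and each root-to-leaf path of $\varphi$ contributes linearly) and — using the definitions of $\len\cdot$ and $\bounds\cdot\cdot$ — that the set $G$ of ground $\Sigma$-terms occurring in $\Phi$ has polynomial size. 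Moreover every quantifier in $\Phi$ is bounded, of the form $\forall \lterm' \subseteq S$ or $\forall \pto' \subseteq S\times D$ with $S, D \subseteq G$, nested to polynomial alternation depth, and by Lemma~\ref{lemma:bounded-labeling} the free set/function symbols range over objects built from $G$.

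It then remains to decide the $T$-satisfiability of such a $\Phi$ in polynomial space. I would enumerate, in polynomial space, candidate \emph{model skeletons}: a maximal set $\tau$ of equalities and disequalities over $G$, together with, for each function symbol $f$ of $\Phi$ and each tuple of ground arguments at which it occurs, the $\tau$-class of the resulting value (the free points-to symbol $\pto$ included). For each candidate one first checks with a single oracle call that $\tau$ is $T$-satisfiable; if so, $\tau$ pins down the relevant finite fragment of $\I$ and of the free symbols up to isomorphism, on a universe of at most $|G|$ classes, and it remains to evaluate the bounded-quantifier prefix of $\Phi$ against this fixed fragment — an alternating polynomial-time subcomputation (hence within polynomial space) that resolves each $\exists$-block by an existential move choosing, for every term of the bounding set $S$ (resp.\ every pair of $S\times D$), a membership bit (resp.\ an image class), each $\forall$-block by a universal move, and then evaluates the now variable-free matrix in polynomial time. $\Phi$ is $T$-satisfiable iff some candidate skeleton passes both checks; this decides $(\seplog,T)$-satisfiability in \textsc{PSPACE}, which together with the hardness argument gives \textsc{PSPACE}-completeness.

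The step I expect to be the main obstacle is making the two kinds of quantification cooperate: $T$-satisfiability quantifies \emph{existentially} over the model $\I$, whereas $\Phi$ contains genuine \emph{universal} quantifiers, so a naive alternating evaluation would wrongly let every universal branch pick its own model. Committing to the polynomial-size skeleton ($\tau$ together with the equivalence classes and the function images) \emph{before} the alternating phase is precisely what forces all universal branches to be evaluated against one common model, so that the procedure decides $\exists\I.\ \I \models_T \Phi$ rather than a weaker statement. Establishing the soundness and completeness of this commitment — and the routine but delicate bookkeeping that $G$, in particular the constants accumulated by nested occurrences of $\wand$ through $\bounds\cdot\cdot$, stays polynomial — is where the real work lies.
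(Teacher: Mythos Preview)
Your hardness argument has a gap. The corollary asserts \textsc{PSPACE}-completeness \emph{for every} theory $T$ whose quantifier-free fragment is in \textsc{PSPACE}, so exhibiting a single such $T$ for which $(\seplog,T)$-satisfiability is hard does not suffice; you would need hardness for each such $T$. The paper handles this by giving a direct, theory-independent reduction from QSAT: a QBF $\forall x_1 \exists y_1 \ldots \forall x_n \exists y_n.\,\psi$ is translated into a $\seplog(T)$ formula built only from fresh constants of sorts $\locs$ and $\data$, points-to atoms, $\emp$, boolean connectives, and $*$ (encoding $\exists$ via $((\uterm_i\mapsto d)\vee\emp)*\cdots$ and $\forall$ via its dual). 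Because this encoding uses no $T$-specific structure and $\locs$ is always interpreted as countably infinite, the reduction is correct for \emph{any} admissible $T$. Your idea is easily repaired along these lines---indeed the reduction in \cite{CalcagnoYangOHearn01} that you cite is itself theory-agnostic---but as stated (``it suffices to exhibit one theory $T$'') the claim is false.

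For membership you take a genuinely different route from the paper. The paper argues that $\solvesl{T}$ itself runs in polynomial space, by bounding the number of recursive calls of $\solveslt{T}$ and the growth of $\Gamma$ at each step, and then invoking the \textsc{PSPACE} hypothesis on $T$ for the satisfiability and entailment checks. You instead bypass the algorithm entirely: you fix a polynomial-size skeleton (an equality type over the polynomially many ground terms plus function images) up front, discharge its $T$-consistency with a single \textsc{PSPACE} oracle call, and then evaluate the bounded-quantifier formula $\Phi$ against the fixed skeleton by an alternating polynomial-time procedure, appealing to $\mathrm{AP}=\textsc{PSPACE}$. This is a cleaner and more robust argument: it does not depend on the operational details of the instantiation loop, and your observation that the existential choice of model must be made \emph{before} the universal branching is exactly the right soundness point. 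The paper's approach, by contrast, ties the complexity bound to the concrete procedure and is therefore more fragile (your worry that lazy instantiation might not stay polynomial-space is well-placed, since the number of subsets of a bounding set of size $\size{\varphi}^2$ is exponential).
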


In addition to being sound and complete, in practice, the approach
$\solvesl{T}$ terminates in much less time that its theoretical
worst-case complexity, given by the above corollary. This fact is
corroborated by our evaluation of our prototype implementation of the
algorithm, described in Section~\ref{sec:eval}, and in the following
examples.

\begin{example}\label{ex:slsolve3}
Consider the $\seplog(T)$ formula $\varphi \equiv \emp \wedge (y \mapsto 0 \wand y \mapsto 1) \wedge y \not\teq \nil$.
When running
$\solvesl{T}(\varphi)$, we first compute the set $C = \{ c \}$,
and introduce fresh symbols $\lterm$ and $\pto$ of sorts $\sets(\locs)$ and $\locs \rightarrow \data$ respectively.
The formula $\labd{(\varphi \lab [\lterm,\pto,C])}$ is
$\lterm \teq \emptyset \wedge \forall \lterm_4 \forall \pto'. \psi \wedge y \not\teq \nil$, where after simplification $\psi$ is:
\[\begin{array}{r@{\hspace{.5em}}c@{\hspace{.5em}}l}
 \psi_4 & \equiv & 
( \lterm_4 \subseteq \set{y,0,1,c} \wedge \pto' \subseteq \set{y,0,1,c} \times \set{y,0,1,d})\Rightarrow \\
 & & ( \lterm_4\cap\lterm \teq \emptyset \wedge \lterm_4 \teq \set{y} \wedge \pto'(y) \teq 0 \wedge y\not\teq\nil) \Rightarrow \\
 & & ( \lterm_4\cup\lterm \teq \set{y} \wedge \ite{y\in\lterm_4}{\pto'(y) \teq 1}{\pto(y) \teq 1} \wedge y\not\teq\nil ) \enspace\\[-2mm]
\end{array}\]
Let $( A_4, (k_1,k_2) ) \leftrightharpoons \forall \lterm_4 \forall \pto'.\psi_4$. 
We call the subprocedure $\solveslt{T}$ on $\Gamma_0$, where:
\[\begin{array}{r@{\hspace{.25em}}c@{\hspace{.25em}}r@{\hspace{.25em}}c@{\hspace{.25em}}r}
\Gamma_0 & \equiv & \purifyrec{\labd{(\varphi \lab [\lterm,\pto,C])}} & \equiv &
\{ \lterm \teq \emptyset \wedge A_4 \wedge y \not\teq \nil, 
\neg A_4 \Rightarrow \neg \psi_4[ k_1, k_2 / \lterm_4, \pto' ] \}.
\end{array}\]
The set $\Gamma_0$ is $T$-satisfiable with a model $\I_0$ where $A_4^{\I_0} = \top$.
Step 2 of the procedure determines a model $\J$ for $\Gamma_0 \cup \{ A_4, \neg \psi_4[ k_1, k_2 / \lterm_4, \pto' ] \}$. 

Let $t_1$ be $\{ y \}$, where we know $t_1^\J = k_1^\J$ since $\J$ must satisfy $k_1 \teq \{ y \}$ as a consequence of $\neg \psi_4[ k_1, k_2 / \lterm_4, \pto' ]$.
Let $t_2$ be a well-sorted subset of $\set{y,0,1,c} \times \set{y,0,1,d}$ such that $t_2^\J = k_2^\J$.
Such a subset exists since $\J$ satisfies $k_2 \subseteq \set{y,0,1,c} \times \set{y,0,1,d}$.
Notice that $t_2( y )^\J = 0^\J$ since $\J$ must satisfy $k_2( y ) \teq 0$.
Step 3 of the procedure 
recursively invokes $\solveslt{T}$ on $\Gamma_1$, where:
\[\begin{array}{r@{\hspace{.25em}}c@{\hspace{.25em}}l}
\Gamma_1  & \equiv & \Gamma_0 \cup \purifyrec{ A_4 \Rightarrow \psi_4[ t_1, t_2/\lterm_4, \pto' ] } \\
& \equiv & \Gamma_0 \cup \{ A_4 \Rightarrow y\not\teq\nil \Rightarrow 
( \{ y \} \teq \{ y \} \wedge \ite{y\in\{y\}}{0 \teq 1}{\pto(y) \teq 1} \wedge y \not\teq \nil ) \} \\
& \equiv & \Gamma_0 \cup \{ A_4 \Rightarrow y\not\teq\nil \Rightarrow \bot \}
\end{array}\]
The set $\Gamma_1$ is $T$-unsatisfiable, since the added constraint contradicts $A_4 \wedge y \not\teq \nil$.
\hfill$\blacksquare$
\end{example}

\subsection{Integration in DPLL($T$)}

We have implemented the algorithm described in this section within the SMT solver CVC4~\cite{CVC4-CAV-11}.
Our implementation accepts 
an extended syntax of SMT lib version 2 format~\cite{smtlib25} for specifying $\seplogl{T}$ formulae.
In contrast to the presentation so far,
our implementation does not explicitly introduce quantifiers,
and instead treats $\seplog$ atoms natively using an integrated subsolver that 
expands the semantics of these atoms in lazy fashion.

In more detail, given a $\seplogl{T}$ input $\varphi$, our
implementation lazily computes the expansion of $\labd{(\varphi \lab
  [\lterm,\pto,C])}$ based on the translation rules in
Figures~\ref{fig:labeling-rules} and \ref{fig:bounded-labeling} and
the counterexample-guided instantiation procedure in
Figure~\ref{fig:solve}. This is accomplished by a module, which we
refer to as the $\seplog$ solver, that behaves analogously to a
DPLL($T$)-style \emph{theory solver}, that is, a dedicated solver
specialized for the $T$-satisfiability of a conjunction of
$T$-constraints~\footnote{Strictly speaking, the $\seplog$ functions
  $\mapsto, *, \wand$ are not symbols belonging to a first-order
  theory, and thus this module is not a theory solver in the standard
  sense.}.

The DPLL($T$) solving architecture~\cite{GanzingerHagenNieuwenhuisOliverasTinelli04} used by most modern SMT solvers,
given as input a set of quantifier-free $T$-formulae $\Gamma$,
incrementally constructs of set of literals over the atoms of $\Gamma$
until either it finds a set $M$ that entail $\Gamma$ at the propositional level,
or determines that such a set cannot be found.
In the former case, we refer to $M$ as a \emph{satisfying assignment} for $\Gamma$.
If $T$ is a combination of theories $T_1 \cup \ldots \cup T_n$, 
then $M$ is partitioned into $M_1 \cup \ldots \cup M_n$ where the atoms of $M_i$ are 
either $T_i$-constraints or (dis)equalities shared over multiple theories.
We use a theory solver (for $T_i$) to determine the $T_i$-satisfiability of the set $M_i$,
interpreted as a conjunction.
Given $M_i$, the solver will either add additional formulae to $\Gamma$,
or otherwise report that $M_i$ is $T_i$-satisfiable.

For $\seplogl{T}$ inputs,
we extend our input syntax with a set of functions:
\[\begin{array}{l@{\hspace{1em}}l@{\hspace{1em}}l}
\mapsto : \locs \times \data \rightarrow \bools & *^n : \bools^n \rightarrow \bools & \emp : \bools \\
\wand : \bools \times \bools \rightarrow \bools & \lblf : \bools \times \sets(\locs) \rightarrow \bools \\
\end{array}\]
which we call \emph{spatial functions}
\footnote{ These functions are over the $\bools$ sort.  We refer to
  these functions as taking \emph{formulae} as input, where formulae
  may be cast to terms of sort $\bools$ through use of an if-then-else
  construct.}.  We refer to $\lblf$ as the \emph{labeling predicate},
which can be understood as a placeholder for the $\lab$ transformation
in Figures~\ref{fig:labeling-rules} and \ref{fig:bounded-labeling}.
We refer to $p( \vec t )$ as an \emph{unlabeled spatial atom} if $p$
is one of $\{ \emp, \mapsto, *^n, \wand \}$ and $\vec t$ is a vector
of terms not containing $\lblf$.  If $a$ is an unlabeled spatial atom,
We refer to $\lblf( a, \lterm )$ as a \emph{labeled spatial atom}, and
extend these terminologies to literals.  We assume that all
occurrences of spatial functions in our input $\varphi$ occur only in
unlabeled spatial atoms. Moreover, during execution, our
implementation transforms all spatial atoms into a \emph{normal form},
by applying associativity to flatten nested applications of $*$, and
distributing $\Sigma$-formulae over spatial connectives, e.g.\ $(
( \xterm \mapsto \yterm \wedge \tterm \teq \uterm ) * \zterm \mapsto
\wterm ) \iff \tterm \teq \uterm \wedge ( \xterm \mapsto \yterm *
\zterm \mapsto \wterm )$.

When constructing satisfying assignments for $\varphi$, we relegate
the set of all spatial literals $M_k$ to the $\seplog$ solver.  For
all unlabeled spatial literals $(\neg) a$, we add to $\Gamma$ the
formula $(a \Leftrightarrow \lblf( a, \lterm_0 ) )$, where $\lterm_0$
is a distinguished free constant of sort $\sets(\locs)$.  Henceforth,
it suffices for the $\seplog$ solver to only consider the labeled
spatial literals in $M_k$.  To do so, firstly, it adds to $\Gamma$
formulae based on the following criteria, which model one step of the
reduction from Figure~\ref{fig:labeling-rules}:
\[\begin{array}{l@{\hspace{1em}}l}
\lblf( \emp, \lterm ) \Leftrightarrow \lterm \teq \emptyset
& \text{if } (\neg) \lblf( \emp, \lterm ) \in M_k \\
\lblf( t \mapsto u, \lterm ) \Leftrightarrow \lterm \teq \{ t \} \wedge \pto( t ) \teq u \wedge t \not\teq \nil
& \text{if } (\neg)  \lblf( t \mapsto u, \lterm ) \in M_k \\
\lblf( ( \varphi_1 * \ldots * \varphi_n ), \lterm ) \Rightarrow ( \varphi_1[ \ell_1 ] \wedge \ldots \wedge \varphi_n[\ell_n] )
& \text{if } \lblf( ( \varphi_1 * \ldots * \varphi_n ), \lterm ) \in M_k \\
\neg \lblf( ( \varphi_1 \wand \varphi_2 ), \lterm ) \Rightarrow ( \varphi_1[ \ell_1 ] \wedge \neg \varphi_2[\ell_2] )
& \text{if } \neg \lblf( ( \varphi_1 \wand \varphi_2 ), \lterm ) \in M_k 
\end{array}\]
where each $\ell_i$ is a fresh free constant, and 
$\varphi_i[ \ell_i ]$ denotes the result of replacing each spatial atom $a$ in $\varphi_i$ with $\lblf( a, \ell_i )$.
These formulae are added eagerly when such literals are added to $M_k$.
To handle negated $*$-atoms and positive $\wand$-atoms,
the $\seplog$ solver adds to $\Gamma$ formulae based on the criteria:
\[\begin{array}{l@{\hspace{1em}}l}
\neg \lblf( ( \varphi_1 * \ldots * \varphi_n ), \lterm ) \Rightarrow ( \neg \varphi_1[ t_1 ] \vee \ldots \vee \neg \varphi_n[t_n] )
& \text{if } \neg \lblf( ( \varphi_1 * \ldots * \varphi_n ), \lterm ) \in M_k \\
\lblf( ( \varphi_1 \wand \varphi_2 ), \lterm ) \Rightarrow ( \neg \varphi_1[ t_1, f_1 ] \vee \varphi_2[t_2, f_2] )
& \text{if } \lblf( ( \varphi_1 \wand \varphi_2 ), \lterm ) \in M_k
\end{array}\]
where each $t_i$ and $f_i$ is chosen based on the same criterion as described in Figure~\ref{fig:solve}.
For wand, we write $\varphi_i[ t_i, f_i ]$ to denote $\varphi_i'[ t_i ]$, where
$\varphi_i'$ is the result of replacing all atoms of the form $t \mapsto u$ where $t \in t_1$ in $\varphi_i$ by $f_i( t ) \teq u$.

CVC4 uses a scheme for incrementally checking the $T$-entailments
required by $\solveslt{T}$, as well as constructing models $\J$
satisfying the negated form of the literals in literals in $M_k$
before choosing such terms~\footnote{For details, see Section 5
  of~\cite{RKK2015}.}.  The formula of the above form are added to
$\Gamma$ lazily, that is, after all other solvers (for theories $T_i$)
have determined their corresponding sets of literals $M_i$ are
$T_i$-satisfiable.

\paragraph{\bf Partial Support for Quantifiers} 
In many practical cases it is useful to check the validity of
entailments between existentially quantified $\seplog(T)$ formulae
such as $\exists \vec{x} ~.~ \phi(\vec{x})$ and $\exists \vec{y} ~.~
\psi(\vec{y})$. Typically, this problem translates into a
satisfiability query for an $\seplog(T)$ formula \(\exists \vec{x}
\forall \vec{y} ~.~ \phi(\vec{x}) \wedge \neg \psi(\vec{y})\), with
one quantifier alternation. A partial solution to this problem is to
first check the satisfiability of $\phi$. If $\phi$ is not
satisfiable, the entailment holds trivially, so let us assume that
$\phi$ has a model. Second, we check satisfiability of $\phi \wedge
\psi$. Again, if this is unsatisfiable, then the entailment cannot
hold, because there exists a model of $\phi$ which is not a model of
$\psi$. Else, if $\phi \wedge \psi$ has a model, we add an equality
$x=y$ for each pair of variables $(x,y) \in \vec{x}\times\vec{y}$ that
are mapped to the same term in this model, the result being a
conjunction $E(\vec{x},\vec{y})$ of equalities. Finally, we check the
satisfiability of the formula $\phi \wedge \neg\psi \wedge E$. If this
formula is unsatisfiable, the entailment is valid, otherwise, the test
is inconclusive.

\section{Evaluation}
\label{sec:eval}

We tested our implementation of the $(\seplog,T)$-satisfiability
procedure in CVC4 (version 1.5 prerelease) on two kinds of benchmarks: \begin{inparaenum}[(i)]
\item finite unfoldings of inductive predicates with data constraints,
  mostly inspired by existing benchmarks, such as SL-COMP'14
  \cite{sl-comp14}, and
\item verification conditions automatically generated by applying the
  weakest precondition calculus of ~\cite{IshtiaqOHearn01} to the
  program loops in Figure \ref{fig:loops} several times.
\end{inparaenum}
All experiments were run on a 2.80GHz Intel(R) Core(TM) i7 CPU machine
with with 8MB of cache\footnote{The CVC4 binary and examples used in
  these experiments are available at
  \url{http://homepage.cs.uiowa.edu/~ajreynol/ATVA2016-seplog/}.}.
For a majority of benchmarks, the runtime of CVC4 is quite low, with
the exception of the $n=4,8$ cases of the entailments between
$\mathsf{tree}_1^n$ and $\mathsf{tree}_2^n$ formulae, which resulted
in a timeout after $300$ seconds.

\begin{figure}[htb]
\begin{center}
\begin{minipage}{7cm}
{\footnotesize\begin{tabbing}
1:  \texttt{wh}\=\texttt{ile} $\wterm \neq \nil$ \texttt{do} \\
2:  \> $\mathbf{assert}(\wterm.\datap = 0)$ \\
3:  \> $\vterm := \wterm$; \\
4:  \> $\wterm := \wterm.\nextp$; \\
5:  \> $\mathbf{dispose}(\vterm)$;   \\
6:  \> \texttt{do} \\
\\
\> {\bf (z)disp}
\end{tabbing}}
\end{minipage}
\hspace*{2cm}
\begin{minipage}{6.5cm}
{\footnotesize\begin{tabbing}
1:  \texttt{wh}\=\texttt{ile} $\uterm \neq \nil$ \texttt{do} \\
2:  \> $\mathbf{assert}(\uterm.\datap = 0)$ \\
3:  \> $\wterm := \uterm.\nextp$; \\
4:  \> $\uterm.\nextp := \vterm$; \\
5:  \> $\vterm := \uterm$;   \\
6:  \> $\uterm := \wterm$;   \\
7:  \> \texttt{do} \\
\> {\bf (z)rev}
\end{tabbing}}
\end{minipage}
\begin{tabular}{r@{ }c@{ }l@{\hskip 1.5cm}r@{ }c@{ }l}
$\mathsf{ls}^0(x)$ & $\triangleq$ & $\emp \land x = \nil$ & $\mathsf{zls}^0(x)$ & $\triangleq$ & $\emp \land x = \nil$\\
$\mathsf{ls}^n(x)$ & $\triangleq$ & $\exists y \, . \, x \mapsto y * \mathsf{ls}^{n-1}(y)$ & $\mathsf{zls}^n(x)$ & $\triangleq$ & $\exists y \, . \, x \mapsto (0, y) * \mathsf{zls}^{n-1}(y)$
\end{tabular}
\end{center}
\vspace*{-\baselineskip}
\caption{Program Loops}\label{fig:loops}
\vspace*{-\baselineskip}
\end{figure}

\begin{table}[htb]
\begin{tabular}{|c|c|c|c|c|c|c|}
\hline
{\bf lhs} & {\bf rhs} & $n=1$ & $n=2$ & $n=3$ & $n=4$ & $n=8$ \\
\hline
\multicolumn{7}{|c|}{\bf Unfoldings of inductive predicates} \\
\hline
$\scriptstyle \mathsf{lseg_1}(x, y, a) \triangleq \emp \land x = y \lor \exists z \exists b \, . \,$ & $\scriptstyle \mathsf{lseg_2}(x, y, a) \triangleq \emp \land x = y \lor \exists z \exists b \, . \,$ & {\scriptsize unsat} & {\scriptsize unsat} & {\scriptsize unsat} & {\scriptsize unsat} & {\scriptsize unsat}  \\[-3pt]
$\scriptstyle x \mapsto (a, z) * \mathsf{lseg_1}(z, y, b) \land b = a+10$ & $\scriptstyle x \mapsto (a, z) * \mathsf{lseg_2}(z, y, b) \land a \leq b$ & {\scriptsize $<0.01$s} & {\scriptsize $<0.01$s} & {\scriptsize $<0.01$s} & {\scriptsize 0.01s} & {\scriptsize 0.01s}  \\
\hline
$\scriptstyle \mathsf{tree_1}(x, a) \triangleq \emp \land x = \nil \lor \exists y \exists z \exists b \exists c \, . \, $ & $\scriptstyle \mathsf{tree_2}(x, a) \triangleq \emp \land x = \nil \lor \exists y \exists z \exists b \exists c \, . \, $ & {\scriptsize unsat} & {\scriptsize unsat} & {\scriptsize unsat} & {\scriptsize timeout} & {\scriptsize timeout} \\[-3pt]
$\scriptstyle x \mapsto (a, y, z) * \mathsf{tree_1} (y, b) * \mathsf{tree_1}(z, c)  \land$ & $\scriptstyle x \mapsto (a, y, z) * \mathsf{tree_2} (y, b) * \mathsf{tree_2}(z, c)  \land$ & {\scriptsize $<0.01$s} & {\scriptsize 0.06s} & {\scriptsize 1.89s} & {\scriptsize $>300$s} & {\scriptsize $>300$s}  \\[-3pt]
$\scriptstyle b = a - 10 \land c = a + 10$ & $\scriptstyle b \leq a \land a \leq c$ & & & & &  \\
\hline
$\scriptstyle \mathsf{pos_1}(x, a) \triangleq x \mapsto a \lor \exists y \exists b \, . \,$ & $\scriptstyle \mathsf{neg_1}(x, a) \triangleq \lnot x \mapsto a \lor \exists y \exists b \, . \,$ & {\scriptsize unsat} & {\scriptsize unsat} & {\scriptsize unsat} & {\scriptsize unsat} & {\scriptsize unsat} \\[-3pt]
$\scriptstyle x \mapsto a * \mathsf{pos_1}(y, b)$ & $\scriptstyle x \mapsto a * \mathsf{neg_1}(y, b)$ & {\scriptsize 0.02s} & {\scriptsize 0.04s} & {\scriptsize 0.11s} & {\scriptsize 0.25 s} & {\scriptsize 3.01s} \\
\hline
$\scriptstyle \mathsf{pos_1}(x, a) \triangleq x \mapsto a \lor \exists y \exists b \, . \,$ & $\scriptstyle \mathsf{neg_2}(x, a) \triangleq x \mapsto a \lor \exists y \exists b \, . \,$ & {\scriptsize unsat} & {\scriptsize unsat} & {\scriptsize unsat} & {\scriptsize unsat} & {\scriptsize unsat}  \\[-3pt]
$\scriptstyle x \mapsto a * \mathsf{pos_1}(y, b)$ & $\scriptstyle \lnot x \mapsto a * \mathsf{neg_2}(y, b)$ & {\scriptsize 0.01s} & {\scriptsize 0.05s} & {\scriptsize 0.11s} & {\scriptsize 0.23s} & {\scriptsize 2.10s}  \\
\hline
$\scriptstyle \mathsf{pos_1}(x, a) \triangleq x \mapsto a \lor \exists y \exists b \, . \,$ & $\scriptstyle \mathsf{neg_3}(x, a) \triangleq x \mapsto a \lor \exists y \exists b \, . \,$ & {\scriptsize unsat} & {\scriptsize unsat} & {\scriptsize unsat} & {\scriptsize unsat} & {\scriptsize unsat}  \\[-3pt]
$\scriptstyle x \mapsto a * \mathsf{pos_1}(y, b)$ & $\scriptstyle x \mapsto a * \lnot \mathsf{neg_3}(y, b)$ & {\scriptsize 0.02s} & {\scriptsize 0.07s} & {\scriptsize 0.24s} & {\scriptsize 0.46s} & {\scriptsize 4.05s}  \\
\hline
$\scriptstyle \mathsf{pos_1}(x, a) \triangleq x \mapsto a \lor \exists y \exists b \, . \,$ & $\scriptstyle \mathsf{neg_4}(x, a) \triangleq x \mapsto a \lor \exists y \exists b \, . \,$ & {\scriptsize unsat} & {\scriptsize sat} & {\scriptsize unsat} & {\scriptsize sat} & {\scriptsize sat} \\[-3pt]
$\scriptstyle x \mapsto a * \mathsf{pos_1}(y, b)$ & $\scriptstyle \lnot x \mapsto a * \lnot \mathsf{neg_4}(y, b)$ & {\scriptsize 0.05s} & {\scriptsize 0.24s} & {\scriptsize 0.33s} & {\scriptsize 2.77s} & {\scriptsize 24.72s} \\
\hline
$\scriptstyle \mathsf{pos_2}(x, a) \triangleq x \mapsto a \lor \exists y \, . \,$ & $\scriptstyle \mathsf{neg_5}(x, a) \triangleq \lnot x \mapsto a \lor \exists y \, . \,$ & {\scriptsize unsat} & {\scriptsize unsat} & {\scriptsize unsat} & {\scriptsize unsat} & {\scriptsize unsat} \\[-3pt]
$\scriptstyle x \mapsto a * \mathsf{pos_2}(a, y)$ & $\scriptstyle x \mapsto a * \mathsf{neg_5}(a, y)$ & {\scriptsize 0.02s} & {\scriptsize 0.05s} & {\scriptsize 0.14s} & {\scriptsize 0.32s} & {\scriptsize 3.69s} \\
\hline
$\scriptstyle \mathsf{pos_2}(x, a) \triangleq x \mapsto a \lor \exists y \, . \,$ & $\scriptstyle \mathsf{neg_6}(x, a) \triangleq x \mapsto a \lor \exists y \, . \,$ & {\scriptsize sat} & {\scriptsize unsat} & {\scriptsize unsat} & {\scriptsize unsat} & {\scriptsize unsat} \\[-3pt]
$\scriptstyle x \mapsto a * \mathsf{pos_2}(a, y)$ & $\scriptstyle \lnot x \mapsto a * \mathsf{neg_6}(a, y)$ & {\scriptsize 0.02s} & {\scriptsize 0.04s} & {\scriptsize 0.13s} & {\scriptsize 0.27s} & {\scriptsize 2.22s}  \\
\hline
\multicolumn{7}{|c|}{\bf Verification conditions} \\
\hline
$\scriptstyle \mathsf{ls}^{n}(w)$ & $\scriptstyle\mathsf{wp}(\mathbf{disp}, \mathsf{ls}^{n-1}(w))$ & {\scriptsize $<0.01$s} & {\scriptsize 0.02s} & {\scriptsize 0.05s} & {\scriptsize 0.12s} & {\scriptsize 1.97s} \\
\hline
$\scriptstyle\mathsf{ls}^{n}(w)$ & $\scriptstyle\mathsf{wp}^{n}(\mathbf{disp},\emp \land w = \nil)$ & {\scriptsize $<0.01$s} & {\scriptsize 0.02s} & {\scriptsize 0.12s} & {\scriptsize 0.41s} & {\scriptsize 22.97s} \\
\hline
$\scriptstyle\mathsf{zls}^{n}(w)$  & $\scriptstyle\mathsf{wp}(\textbf{zdisp}, \mathsf{zls}^{n-1}(w))$  & {\scriptsize 0.01s} & {\scriptsize 0.02s} & {\scriptsize 0.05s} & {\scriptsize 0.11s} & {\scriptsize 1.34s} \\
\hline
$\scriptstyle\mathsf{zls}^{n}(w)$ & $\scriptstyle\mathsf{wp}^n(\textbf{zdisp}, \emp \land w = \nil)$  & {\scriptsize 0.01s} & {\scriptsize 0.02s} & {\scriptsize 0.11s} & {\scriptsize 0.43s} & {\scriptsize 24.13s} \\
\hline
$\scriptstyle\mathsf{ls}^{n}(u) * \mathsf{ls}^{0}(v)$ & $\scriptstyle\mathsf{wp}(\mathbf{rev}, \mathsf{ls}^{n-1}(u) * \mathsf{ls}^{1}(v))$ & {\scriptsize 0.06s} & {\scriptsize 0.08s} & {\scriptsize 0.14s} & {\scriptsize 0.30s} & {\scriptsize 2.83s}  \\
\hline
$\scriptstyle\mathsf{ls}^{n}(u) * \mathsf{ls}^{0}(v)$ & $\scriptstyle\mathsf{wp}^n(\mathbf{rev}, u = \nil \land \mathsf{ls}^{n}(v))$ & {\scriptsize 0.06s} & {\scriptsize 0.12s} & {\scriptsize 0.56s} & {\scriptsize 1.75s} & {\scriptsize 27.82s}  \\
\hline
$\scriptstyle\mathsf{zls}^{n}(u) * \mathsf{zls}^{0}(v)$ & $\scriptstyle\mathsf{wp}(\mathbf{zrev}, \mathsf{zls}^{n-1}(u) * \mathsf{zls}^{1}(v))$ & {\scriptsize 0.22s} & {\scriptsize 0.04s} & {\scriptsize 0.12s} & {\scriptsize 0.25s} & {\scriptsize 2.16s}  \\
\hline
$\scriptstyle\mathsf{zls}^{n}(u) * \mathsf{zls}^{0}(v)$ & $\scriptstyle\mathsf{wp}^n(\mathbf{zrev}, u = \nil \land \mathsf{zls}^{n}(v))$ & {\scriptsize 0.04s} & {\scriptsize 0.10s} & {\scriptsize 0.41s} & {\scriptsize 1.27s} & {\scriptsize 20.26s}  \\
\hline
\end{tabular}\\
\caption{Experimental results}\label{tab:experiments}
\vspace*{-\baselineskip}
\end{table}

The first set of experiments is reported in Table
\ref{tab:experiments}. We have considered the following inductive
predicates commonly used as verification benchmarks \cite{sl-comp14}.
Here we check the validity of the entailment between $\mathsf{lhs}$
and $\mathsf{rhs}$, where both predicates are unfolded $n=1,2,3,4,8$
times. The second set of experiments, reported in Table
\ref{tab:experiments}, considers the verification conditions of the
forms $\varphi \Rightarrow \mathsf{wp}(\mathbf{l},\phi)$ and $\varphi
\Rightarrow \mathsf{wp}^n(\mathbf{l},\phi)$, where
$\mathsf{wp}(\mathbf{l},\phi)$ denotes the weakest precondition of the
$\seplog$ formula $\phi$ with respect to the sequence of statements
$\mathbf{l}$, and $\mathsf{wp}^n(\mathbf{l},\phi) =
\mathsf{wp}(\mathbf{l},
\ldots\mathsf{wp}(\mathbf{l},\mathsf{wp}(\mathbf{l},\phi)) \ldots)$
denotes the iterative application of the weakest precondition $n$
times in a row. We consider the loops depicted in Figure
\ref{fig:loops}, where, for each loop {\bf l} we consider the variant
    {\bf zl} as well, which tests that the data values contained
    within the memory cells are $0$, by the assertions on line 2. The
    postconditions are specified by finite unfoldings of the inductive
    predicates $\mathsf{ls}$ and $\mathsf{zls}$ (Figure
    \ref{fig:loops}).

\section{Conclusions}

We have presented a decision procedure for quantifier-free
$\seplogl{T}$ formulas that relies on a efficient,
counterexample-guided approach for establishing the $T$-satisfiability
of formulas having quantification over bounded sets.  We have
described an implementation of the approach as an integrated subsolver
in the DPLL($T$)-based SMT solver CVC4, showing the potential of the
procedure as a backend for tools reasoning about low-level pointer and
data manipulations.  For future work, we would like to extend the
approach to be applicable to cases where $\locs$ may have a finite
interpretation, and improve the heuristics used by our decision
procedure for choosing quantifier instantiations, in particular for
cases with many quantifier alternations.


\renewcommand{\baselinestretch}{0.90}
\bibliographystyle{splncs03}
\bibliography{refs}


\appendix 
\section{Proofs}

\subsection{Proof of Lemma \ref{lemma:labeling}}

\proof{ By induction on the structure of $\varphi$. We distinguish the
  following cases: \begin{compactitem}[-]
  \item $\varphi \equiv \phi * \psi$ and $\I,h \models_\seplog \phi *
    \psi$ iff there exist heaps $g_1, g_2$ such that $h=g_1 \uplus
    g_2$ and $\I,g_1 \models_\seplog \phi$, $\I,g_2 \models_\seplog
    \psi$. ``$\Rightarrow$'' Let $\tup{h}$ and $\tup{h'}$ be tuples of
    heaps satisfying conditions (\ref{it1:labeling}) and
    (\ref{it2:labeling}). By the induction hypothesis we obtain:
    \[\begin{array}{rcl}
    \I[\lterm_1\leftarrow\dom(g_1)][\tup\lterm\leftarrow\dom(\tup{h})][\tup\pto \leftarrow \tup{h'}] & \models_T & \labd{\phi \lab [\lterm_1\cap\tup\lterm,\tup\pto]} \\
    \I[\lterm_2\leftarrow\dom(g_2)][\tup\lterm\leftarrow\dom(\tup{h})][\tup\pto \leftarrow \tup{h'}] & \models_T & \labd{\phi \lab [\lterm_2\cap\tup\lterm,\tup\pto]} 
    \end{array}\]
    because $g_j \cap h_i \subseteq h'_i$ for each $j=1,2$ and
    $i=1,\ldots,n$. Since, moreover, $\dom(g_1) \cap \dom(g_2) =
    \emptyset$ and $\dom(g_1) \cup \dom(g_2) = \bigcup_{i=1}^n
    \dom(h_i)$, we obtain
    \(\I[\tup\lterm\leftarrow\dom(\tup{h})][\tup\pto\leftarrow\tup{h'}]
    \models_T \labd{\phi*\psi \lab [\lterm,\pto]}\). ``$\Leftarrow$'' If
    \(\I[\tup\lterm\leftarrow\dom(\tup{h})][\tup\pto\leftarrow\tup{h'}]
    \models_T \labd{\phi*\psi \lab [\lterm,\pto]}\), there exists sets
    $L_1, L_2 \subseteq \locs$ such that $L_1 \cap L_2 = \emptyset$
    and $\dom(h)=L_1 \cup L_2$. Let $g_1 = \proj{h}{L_1}$ and $g_2 =
    \proj{h}{L_2}$. We have that $h = g_1 \uplus g_2$ and:
    \[\begin{array}{rcl}
    \I[\lterm_1\leftarrow\dom(g_1)][\tup\lterm\leftarrow\dom(\tup{h})][\tup\pto\leftarrow\tup{h'}] & \models_T & \labd{\phi \lab [\lterm_1\cap\tup\lterm,\tup\pto]} \\
    \I[\lterm_2\leftarrow\dom(g_2)][\tup\lterm\leftarrow\dom(\tup{h})][\tup\pto\leftarrow\tup{h'}] & \models_T & \labd{\psi \lab [\lterm_2\cap\tup\lterm,\tup\pto]}
    \end{array}\]
    Since, moreover, $g_j = \biguplus_{i=1}^n g_j \cap h_i$ and $g_j
    \cap h_i \subseteq h'_i$, for $j=1,2$ and $i=1,\ldots,n$ we can
    apply the induction hypothesis to obtain that $\I,g_1
    \models_\seplog \phi$ and $\I,g_2 \models_\seplog \psi$, thus
    $\I,h \models_\seplog \phi*\psi$.
    \item $\varphi \equiv \phi\wand\psi$. ``$\Rightarrow$'' Suppose
      that $\I,h \models_\seplog \phi\wand\psi$ and let $g \subseteq
      g'$ be heaps such that $g \# h$ and
      $\I[\lterm'\leftarrow\dom(g)][\pto'\leftarrow g'] \models_T
      \labd{\phi \lab [\lterm',\pto']}$. By the induction hypothesis,
      we obtain $\I,g \models_\seplog \phi$, thus $\I,g\uplus h
      \models_\seplog \psi$. Since, moreover, $g\cdot\tup{h}$ and
      $g'\cdot\tup{h'}$ satisfy the conditions (\ref{it1:labeling})
      and (\ref{it2:labeling}), by the induction hypothesis, we obtain
      \(\I[\lterm'\leftarrow\dom(g)][\pto'\leftarrow
        g'][\tup\lterm\leftarrow\dom(\tup{h})][\pto\leftarrow\tup{h}']
      \models_T \labd{\psi \lab
        [\lterm'\cdot\tup\lterm,\pto'\cdot\tup{\pto}]}\). Since the
      choice of $g$ and $g'$ was arbitrary, we obtain that
      \(\I[\tup\lterm\leftarrow\dom(\tup{h})][\pto\leftarrow\tup{h}']
      \models_T \labd{\phi \wand \psi \lab
        [\tup\lterm,\tup{\pto}]}\). ``$\Leftarrow$'' Suppose that
      \(\I[\tup\lterm\leftarrow\dom(\tup{h})][\pto\leftarrow\tup{h}']
      \models_T \labd{\phi \wand \psi \lab [\tup\lterm,\tup{\pto}]}\)
      and let $g \subseteq g'$ be heaps such that $g \# h$ and $\I,g
      \models_\seplog \phi$. By the induction hypothesis, we have that
      \(\I[\lterm'\leftarrow\dom(g)][\pto'\leftarrow g'] \models_T
      \labd{\phi \lab [\lterm',\pto']}\) and since $\dom(g) \cap
      (\bigcup_{i=1}^n\dom(h_i)) = \emptyset$, we have
      \(\I[\lterm'\leftarrow\dom(g)][\pto'\leftarrow
        g'][\tup\lterm\leftarrow\tup{h}][\tup\pto\leftarrow\tup{h}']
      \models_T \labd{\psi \lab
        [\lterm'\cdot\tup\lterm,\pto'\cdot\tup\pto]}\). By the
      induction hypothesis, we obtain $\I,g\uplus h \models_\seplog
      \psi$, thus $\I,h \models_\seplog \phi\wand\psi$. 
    \item $\varphi \equiv \tterm \mapsto \uterm$. We have $\I,h
      \models_\seplog \tterm \mapsto \uterm$ iff
      $h=\{(\tterm^\I,\uterm^I)\}$ and
      $\tterm^\I\not\teq\nil$. ``$\Rightarrow$'' Let $\tup{h}$ and
      $\tup{h}'$ be tuples of heaps satisfying the conditions
      (\ref{it1:labeling}) and (\ref{it2:labeling}). Then there exists
      $1 \leq i \leq n$ such that $\dom(h_i) = \{\tterm^I\}$,
      $h'_i(\tterm^I)=\uterm^I$ and $\dom(h_j)=\emptyset$ for all $j
      \in\set{1,\ldots,n} \setminus \set{i}$. We obtain, consequently
      that
      \(\I[\tup\lterm\leftarrow\dom(\tup{h})][\pto\leftarrow\tup{h}']
      \models_T \bigcup \tup\lterm=\set{\tterm} \wedge
      \tupite{\tterm\in\tup\lterm}{\tup\pto(\tterm)=\uterm} \wedge
      \tterm\not\teq\nil\). ``$\Leftarrow$'' If
      $\I[\tup\lterm\leftarrow\dom(\tup{h})][\tup\pto\leftarrow\tup{h}']
      \models_T \labd{\tterm \mapsto \uterm \lab [\tup\lterm,\tup\pto]}$ for
      each $\tup{h}$ and $\tup{h}'$ satisfying conditions
      (\ref{it1:labeling}) and (\ref{it2:labeling}), we easily obtain
      that $\{\tterm^\I\} = \dom(h_i) \subseteq \dom(h'_i)$ and
      $h'_i(\tterm^\I)=\uterm^\I$ for some $1 \leq i \leq n$, leading
      to $h=\{(\tterm^\I,\uterm^\I)\}$. Moreover
      $\tterm^\I\not\teq\nil$, thus $\I,h \models_\seplog \tterm
      \mapsto \uterm$.
    \item $\varphi \equiv \emp$. We have $\I,h \models_\seplog \emp$
      iff $\dom(h) = \emptyset$. ``$\Rightarrow$'' For any tuples
      $\tup{h}$ satisfying condition (\ref{it1:labeling}) we have
      $\dom(h_i) = \emptyset$ for all $1 \leq i \leq n$, thus
      $\I[\tup\lterm\leftarrow\dom(\tup{h})][\tup\pto\leftarrow\tup{h}']
      \models_T \bigcup\tup\lterm=\emptyset$ for all $\tup{h}'$
      satisfying condition (\ref{it2:labeling}). ``$\Leftarrow$ Let
      $\tup{h}$ and $\tup{h}'$ be tuples of heaps satisfying
      conditions (\ref{it1:labeling}) and (\ref{it2:labeling}), such
      that $\I[\tup\lterm\leftarrow\dom(\tup{h})][\tup\pto\leftarrow
        \tup{h}'] \models_T \labd{\emp \lab [\tup\lterm,\tup\pto]}$,
      then $\dom(h)=\emptyset$ and $\I,h \models_\seplog \emp$.    
  \end{compactitem}
  The cases $\varphi \equiv \phi \wedge \psi$, $\varphi \equiv
  \neg\phi$ and $\varphi$ is a $\Sigma$-formula are an easy
  exercise. \qed}

\subsection{Proof of Lemma \ref{lemma:bounded-labeling}}

Its proof relies on the following technical fact \cite{YangPhD}:

\begin{proposition}\label{prop:bounded-wand}
  Given two formulae $\phi,\psi \in \seplog(T)$, for any
  interpretation $\I$ of $T$, let \(L \subseteq \locs^\I \setminus
  \terms{\phi\wand\psi}^\I\) be a set such that
  $\card{L}=\max(\len{\phi},\len{\psi})$ and $v \in \data^\I \setminus
  \terms{\phi\wand\psi}^\I$ be a data value. Then, for any heap $h$ we
  have $\I,h \models_\seplog \phi \wand \psi$ if and only if, for any
  heap $h'$ such that $h \# h'$, $\I,h' \models_\seplog \phi$,
  and: \begin{compactenum}
  \item $\dom(h') \subseteq L \cup \terms{\phi\wand\psi}^\I$,
  \item $h'(\ell) \in \terms{\phi\wand\psi}^\I \cup
    \set{v}$, for all $\ell \in \dom(h')$,
  \end{compactenum}
  we have that $\I,h\uplus h' \models_\seplog \psi$.
\end{proposition}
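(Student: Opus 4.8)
The plan is to prove the two implications of the stated equivalence separately; the forward one is immediate and the backward one carries all the content. For the forward direction, assume $\I, h \models_\seplog \phi \wand \psi$. By the semantics of $\wand$, for \emph{every} heap $h'$ with $h' \# h$ and $\I, h' \models_\seplog \phi$ we already have $\I, h \uplus h' \models_\seplog \psi$; a fortiori this holds for those $h'$ additionally satisfying the domain/range restrictions~(1)--(2), which is exactly the right-hand side.

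For the backward direction, assume the right-hand side and let $h'$ be an \emph{arbitrary} heap with $h' \# h$ and $\I, h' \models_\seplog \phi$; we must derive $\I, h \uplus h' \models_\seplog \psi$. The idea is to shrink $h'$ to a heap $h''$ that meets restrictions~(1)--(2) while (i) still satisfying $\phi$ and (ii) behaving like $h'$ with respect to $\psi$ after disjoint union with the fixed $h$. Put $N = \max(\len\phi,\len\psi) = \card{L}$ and recall $\terms{\phi}^\I, \terms{\psi}^\I \subseteq \terms{\phi\wand\psi}^\I$. Applying the heap-reduction underlying Lemma~\ref{lemma:decidability} to $h'$ with the pool $L$ and the generic value $v$ yields a heap $h''$ with $\dom(h'') \subseteq L \cup \terms{\phi\wand\psi}^\I$, $h''(\ell) \in \terms{\phi\wand\psi}^\I \cup \set{v}$ for all $\ell \in \dom(h'')$, that agrees with $h'$ on $\terms{\phi\wand\psi}^\I$ and has the same number of ``invisible'' cells as $h'$ up to the threshold $N$. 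Since $N \geq \len\phi$, the same $L$ and $v$ are admissible for $\phi$ in the sense of Lemma~\ref{lemma:decidability}, so $\I, h'' \models_\seplog \phi$; and $h'' \# h$, since the retained part of $h''$ is a sub-heap of $h'$ and (in the applications) $L$ consists of fresh locations disjoint from $\dom(h)$. The right-hand side then gives $\I, h \uplus h'' \models_\seplog \psi$.

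It remains to transfer this back, i.e.\ to show $\I, h \uplus h' \models_\seplog \psi$ iff $\I, h \uplus h'' \models_\seplog \psi$. This is a second use of Lemma~\ref{lemma:decidability}, now for $\psi$ (legitimate because $N \geq \len\psi$ and $v \notin \terms{\psi}^\I$): taking disjoint union with the \emph{same} heap $h$ preserves the $N$-bounded heap-equivalence between $h'$ and $h''$. Indeed, the cells of $\dom(h)$ visible to $\psi$ are retained with $h$'s data on both sides, and those invisible to $\psi$ contribute the \emph{same} extra amount to $h \uplus h'$ and to $h \uplus h''$, so the two unions still agree on $\terms{\psi}^\I$ and on their invisible-cell count up to $\len\psi$; hence they satisfy exactly the same $\seplog(T)$ formulae of $\len{\cdot}$-measure at most $\len\psi$, in particular $\psi$. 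Combining, $\I, h \uplus h' \models_\seplog \psi$, and since $h'$ was arbitrary, $\I, h \models_\seplog \phi \wand \psi$.

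The main obstacle is precisely this coordination: a single pool $L$ of size $\max(\len\phi,\len\psi)$ and a single generic value $v$ must simultaneously witness that the shrunk heap still models $\phi$ and that $h \uplus h''$ is heap-equivalent to $h \uplus h'$ with respect to $\psi$. Making the second point rigorous needs the auxiliary fact that disjoint union with a fixed heap preserves the relevant bounded heap-equivalence, which is not spelled out in the quoted Lemma~\ref{lemma:decidability} and must be extracted from the definition of $\len{\cdot}$ (this is exactly why that measure is defined with $\len{\phi\wand\psi}=\len\psi$ and additively on $*$). The rest --- keeping $h''$ disjoint from $h$ and $v$ outside $\terms{\phi}^\I\cup\terms{\psi}^\I$ --- is routine, using that $\locs^\I$ is infinite while $\dom(h)$ and $\terms{\phi\wand\psi}^\I$ are finite.
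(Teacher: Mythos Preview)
The paper does not prove this proposition; its entire proof is the one-line citation ``See \cite[Proposition~89]{YangPhD}''. Your sketch is essentially a reconstruction of that cited argument, so in terms of approach you and the paper (via Yang) coincide.

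Two remarks on gaps you have already half-identified. First, the step $h'' \# h$ genuinely requires $L \cap \dom(h) = \emptyset$, which the proposition \emph{as stated} does not assume. Without it the backward direction is actually false: take $\phi \equiv \neg\emp$, $\psi$ any unsatisfiable $\Sigma$-formula, so $\max(\len\phi,\len\psi)=1$ and $\terms{\phi\wand\psi}=\emptyset$; with $L=\{c\}$ and any $h$ with $c \in \dom(h)$, the bounded hypothesis is vacuous while $\I,h \not\models_\seplog \phi\wand\psi$. Your parenthetical ``in the applications'' is exactly the right escape---the labeling rule for $\wand$ introduces a \emph{fresh} set $C'=\bounds{\phi\wedge\psi}{C}$---but a self-contained proof must add this hypothesis or restrict to that use. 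Second, the claim that disjoint union with a fixed $h$ preserves the bounded heap-equivalence is indeed the crux and is established separately in Yang's thesis; it is not a corollary of Lemma~\ref{lemma:decidability} as quoted here. Your one-sentence justification (same visible cells, same count of invisible ones) is the correct intuition, but making it rigorous needs the explicit definition of that equivalence relation, not just the small-model statement.
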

\proof{See \cite[Proposition 89]{YangPhD}. \qed}

\proof{By Lemma \ref{lemma:decidability}, there exists $h$ such that
  $\I,h \models_\seplog \varphi$ if and only if there exists $h'$ such
  that $\dom(h') \subseteq L \cup \terms{\varphi}^\I$ and for all
  $\ell\in\dom(h')$, $h'(\ell) \in \terms{\varphi}^\I \cup
  \set{v}$, for a value $v \in \data^\I \setminus
  \terms{\varphi}^\I$. It is thus sufficient to
  prove the statement for these heaps only and assume from now on that
  $h$ satisfies the conditions of Lemma \ref{lemma:decidability}. We
  prove the following stronger statement: 

  \begin{fact}
    For all heaps $h$ such that: \begin{compactitem}[-]
    \item $\dom(h) \subseteq L \cup \terms{\varphi}^\I$ and for all
      $\ell\in\dom(h)$, $h(\ell) \in \terms{\varphi}^\I \cup \set{v}$,
    \end{compactitem}
    we have $\I,h \models_\seplog \varphi$ iff for all tuples $\tup{h} =
    \tuple{h_1,\ldots,h_n}$ and $\tup{h'} = \tuple{h'_1,\ldots,h'_n}$
    such that: \begin{compactitem}[-]
    \item $h=h_1 \uplus \ldots \uplus h_n$ and $h_1 \subseteq h'_1,
      \ldots, h_n \subseteq h'_n$,
    \end{compactitem} 
    we have \(\I[\tup\lterm\leftarrow \dom(\tup{h})][\tup\pto\leftarrow
      \tup{h}'][C \leftarrow L][d \leftarrow v] \models_T \labd{\varphi \lab
      [\tup\lterm,\tup\pto,C]}\enspace\).
  \end{fact}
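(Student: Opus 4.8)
\noindent
The plan is to prove the Fact by structural induction on $\varphi$, following the same case split as the proof of Lemma~\ref{lemma:labeling}; the only new ingredient is that the $*$ and $\wand$ rules of Figure~\ref{fig:bounded-labeling} introduce \emph{bounded} quantifiers, so in those cases one must show that the added subset constraints lose nothing \emph{provided} $h$ satisfies the small-model hypotheses. (Equivalently, one could show that on such heaps $\labd{\varphi\lab[\tup\lterm,\tup\pto,C]}$ and the unbounded $\labd{\varphi\lab[\tup\lterm,\tup\pto]}$ of Lemma~\ref{lemma:labeling} agree, but this reformulation still needs the very same induction.) For $\varphi$ a $\Sigma$-formula, $\varphi\equiv\emp$, and $\varphi\equiv\tterm\mapsto\uterm$ the bounded rules are the unbounded ones carrying a spurious parameter $C$ that does not occur in the output, so these are copied from Lemma~\ref{lemma:labeling}; for $\varphi\equiv\phi\wedge\psi$ and $\varphi\equiv\neg\phi$ one reuses that proof, the only extra step being to check that the small-model hypotheses on $h$ descend to the immediate subformulae, which uses $\terms{\phi}\subseteq\terms{\varphi}$, monotonicity of $\len{\cdot}$, and, when $\len{\phi}<\len{\varphi}$, a preliminary normalization of a witness heap via Lemma~\ref{lemma:decidability}.

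\noindent
The substance is the $*$ case, where $\terms{\phi*\psi}=\terms{\phi}\cup\terms{\psi}$, $\len{\phi*\psi}=\len{\phi}+\len{\psi}$, and $C$ is interpreted as $L$. In the ``$\Rightarrow$'' direction, a witnessing split $h=g_1\uplus g_2$ has each $\dom(g_i)\subseteq\dom(h)\subseteq L\cup\terms{\phi*\psi}^\I=C^\I\cup\terms{\phi*\psi}^\I$, so the assignment $\lterm_i\leftarrow\dom(g_i)$ already meets the new constraint $\lterm_1\cup\lterm_2\subseteq C\cup\terms{\phi*\psi}$; one then splits the fresh-location budget $L$ between the two conjuncts according to which locations of $L$ each $g_i$ touches --- and $\len{\phi}+\len{\psi}=\len{\phi*\psi}$ is exactly what makes this possible --- so that (after normalizing $g_i$ to small-model form for the corresponding subformula by Lemma~\ref{lemma:decidability} if necessary) the induction hypothesis yields the two labeled conjuncts, which together are the body of the existential. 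In the ``$\Leftarrow$'' direction the bounded existential supplies disjoint $L_1,L_2\subseteq C^\I\cup\terms{\phi*\psi}^\I$ with $L_1\cup L_2=\dom(h)$; setting $g_i=\proj{h}{L_i}$ recovers $h=g_1\uplus g_2$, each $g_i$ again satisfies the small-model hypotheses for its subformula, and the induction hypothesis converts the labeled subformulae back to $\I,g_1\models_\seplog\phi$ and $\I,g_2\models_\seplog\psi$.

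\noindent
For $\varphi\equiv\phi\wand\psi$ the labeling uses a fresh constant set $C'=\bounds{\phi\wedge\psi}{C}$ of size $\max(\len{\phi},\len{\psi})$ and a fresh data constant $d$, interpreted respectively as a fresh block of locations and as $v$. Under that interpretation the bounded universal over $\lterm'\subseteq C'\cup\terms{\phi\wand\psi}$ and $\pto'\subseteq(C'\cup\terms{\phi\wand\psi})\times(\terms{\phi\wand\psi}\cup\set{d})$ is exactly a quantification over the extension heaps $h'$ with $\dom(h')\subseteq{C'}^\I\cup\terms{\phi\wand\psi}^\I$ whose range lies in $\terms{\phi\wand\psi}^\I\cup\set{v}$ --- and Proposition~\ref{prop:bounded-wand} says precisely that ranging over such $h'$ suffices to decide $\I,h\models_\seplog\phi\wand\psi$. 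So both directions follow by combining Proposition~\ref{prop:bounded-wand} with the induction hypothesis applied to $\phi$ on $h'$ (which is small-model for $\phi$ with respect to a subset of ${C'}^\I$) and to $\psi$ on $h\uplus h'$, as in the $\wand$ case of Lemma~\ref{lemma:labeling}.

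\noindent
I expect the main obstacle to be the bookkeeping of the bounding sets through the recursion: guaranteeing that the fresh-location budget carried by $C$ is always large enough for the subformula currently being labeled (this is where $\len{\phi*\psi}=\len{\phi}+\len{\psi}$ and the fresh $C'$ for $\wand$ are used), and that every sub-heap obtained by splitting or extending $h$ can be brought into the small-model form required by the Fact for the relevant subformula, which in general forces a re-application of Lemma~\ref{lemma:decidability} (or Proposition~\ref{prop:bounded-wand}) inside the induction. The atomic and Boolean cases carry no real difficulty.
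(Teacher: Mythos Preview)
Your proposal is correct and takes essentially the same approach as the paper: a structural induction on $\varphi$ that mirrors the proof of Lemma~\ref{lemma:labeling}, supplemented by the verification that the bounded-quantifier constraints of Figure~\ref{fig:bounded-labeling} are met, with Proposition~\ref{prop:bounded-wand} handling the $\wand$ case exactly as you suggest. The paper's own proof of the Fact is in fact a two-sentence sketch saying precisely this, so your outline---including your correct identification of the bookkeeping of the bounding sets through the recursion as the main subtlety---is already considerably more detailed than what appears there.
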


  The proof is by induction on the structure of $\varphi$, among the
  lines of the proof of Lemma \ref{lemma:labeling}. In particular, it
  can be easily verified that the quantified sets variables and
  uninterpreted functions belong to the domains required by the rules
  of Figure \ref{fig:bounded-labeling}. \qed}

\subsection{Proof of Theorem \ref{thm:solveslt}}

\begin{lemma}
\label{lem:purify} For all $T$-formulae $\varphi$:
\begin{enumerate}
\item \label{it:purify-ent} $\varphi$ is $T$-satisfiable only if $\purifyrec{ \varphi }$ is $T$-satisfiable, and
\item \label{it:purify-eq} $\varphi$ and $\unpurify{ \purifyrec{ \varphi } }$ are $T$-equivalent up to their shared variables.
\end{enumerate}
\end{lemma}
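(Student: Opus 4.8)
The plan is to obtain both parts from a single model construction, together with one easy observation about $\purify{\cdot}$. For a guard $A$ introduced along the way, let $Q_A = \forall\vec\qvar.\psi$ be the quantified formula it is associated with and $\vec k_A$ its Skolem symbols. Call an interpretation $\J$ \emph{canonical} if $A^\J = \ltrue$ exactly when $\J \models Q_A$, for every guard $A$ occurring in $\purifyrec{\varphi}$. The observation I would first establish is: $(\star)$ for any (possibly quantified) $T$-formula $\rho$ and any $\J$ that is canonical for the guards that $\purify{\cdot}$ introduces in $\rho$, we have $\J \models \purify{\rho}$ iff $\J \models \rho$. This is a routine induction on the Boolean structure of $\rho$ above the quantified subformulas that $\purify{\cdot}$ abstracts: those subformulas are replaced by their guards, canonicity makes each such replacement truth-preserving, and the surrounding Boolean context is untouched.

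Next I would note that $\quants{\purifyrec{\varphi}}$ is finite and forms a forest: its roots are the quantified subformulas abstracted by $\purify{\cdot}$ in $\varphi$ itself, and the children of $Q_A = \forall\vec\qvar.\psi$ are the quantified subformulas abstracted by $\purify{\cdot}$ in $\neg\psi[\vec k_A/\vec\qvar]$. Finiteness holds because $\purify{\cdot}$ introduces only finitely many guards on a finite formula, and each downward step strictly decreases quantifier-nesting depth (passing from $\forall\vec\qvar.\psi$ to $\neg\psi[\vec k_A/\vec\qvar]$ removes one quantifier and the ground substitution adds none). Given $\I \models_T \varphi$, I would then extend $\I$ to an interpretation $\J$ of all the guards $A$ and Skolem symbols $\vec k_A$ for $A$ in $\purifyrec{\varphi}$ by one top-down pass over this forest: having already fixed $\J$ on the Skolem symbols of all ancestors of $Q_A$ — so that the truth value $\J \models Q_A$ is determined — set $A^\J := \ltrue$ iff $\J \models Q_A$; if $A^\J = \lfalse$, equivalently $\J \models \exists\vec\qvar.\neg\psi$, pick $\vec k_A^\J$ to be a witnessing tuple so that $\J \models \neg\psi[\vec k_A/\vec\qvar]$, and otherwise pick $\vec k_A^\J$ arbitrarily. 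By construction $\J$ is canonical and agrees with $\I$ on every symbol of $\varphi$.

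Part~\ref{it:purify-ent} then follows by checking $\J \models_T \purifyrec{\varphi}$: the member $\purify{\varphi}$ holds by $(\star)$ and $\I \models \varphi$; each member $\neg A \Rightarrow \purify{\neg\psi[\vec k_A/\vec\qvar]}$ holds since either $A^\J = \ltrue$ (vacuous) or $A^\J = \lfalse$, in which case $\J \models \neg\psi[\vec k_A/\vec\qvar]$ by the choice of $\vec k_A^\J$, hence $\J \models \purify{\neg\psi[\vec k_A/\vec\qvar]}$ by $(\star)$. For part~\ref{it:purify-eq}, since the guard/quantified-formula association is a bijection one has $\unpurify{\purify{\chi}} = \chi$ for every $\chi$, so $\unpurify{\purifyrec{\varphi}}$ is the set consisting of $\varphi$ together with the formulae $\neg Q_A \Rightarrow \neg\psi[\vec k_A/\vec\qvar]$, whose only variables in common with $\varphi$ are those of $\fv{\varphi}$. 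Every $T$-model of $\unpurify{\purifyrec{\varphi}}$ is a $T$-model of $\varphi$, since $\varphi$ is one of its members; conversely, the $\J$ built above is a $T$-model of $\unpurify{\purifyrec{\varphi}}$ — the same case split applies, with canonicity turning ``$A^\J = \lfalse$'' into ``$\J \models \neg Q_A$'' — and $\J$ restricts to $\I$ on $\fv{\varphi}$. Hence $\varphi$ and $\unpurify{\purifyrec{\varphi}}$ have the same models up to their shared variables.

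\textbf{Main obstacle.} The only point needing care is the bookkeeping of the second paragraph: one has to check that the guard values $A^\J$ and the Skolem witnesses $\vec k_A^\J$ can be chosen \emph{simultaneously and consistently} over all of $\quants{\purifyrec{\varphi}}$ without a circular dependency — the choice of $A^\J$ needs the ancestors' Skolem values, and the choice of $\vec k_A^\J$ needs $A^\J$. The forest traversal, ordered by the strictly decreasing quantifier-nesting depth, is exactly what resolves this. Everything else, namely $(\star)$ and the two inclusions in part~\ref{it:purify-eq}, is a straightforward induction or a one-line argument.
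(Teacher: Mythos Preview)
Your proposal is correct and, for Part~\ref{it:purify-ent}, follows essentially the same idea as the paper --- extend a model $\I$ of $\varphi$ to one that interprets each guard $A$ by the truth value of its associated quantified formula --- but you are considerably more careful than the paper's sketch. In particular, you explicitly choose Skolem witnesses $\vec k_A^\J$ when $A^\J=\lfalse$ (the paper's sketch does not mention the Skolem constants at all), and you identify and resolve the dependency issue via the forest structure and top-down traversal (the paper simply writes $A^\J = (\forall\vec x.\psi)^\I$, glossing over the fact that $\psi$ may contain Skolem constants of ancestors that $\I$ does not yet interpret). Your observation $(\star)$ makes explicit what the paper leaves implicit.

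For Part~\ref{it:purify-eq} the paper takes a slightly different route: rather than proving the two inclusions directly, it observes that $\unpurify{\purifyrec{\varphi}}$ can be built incrementally from $\{\varphi\}$ by repeatedly adjoining implications $\neg\forall\vec x.\psi \Rightarrow \neg\psi[\vec k/\vec x]$ with fresh $\vec k$, and argues by induction that each such step preserves the model class up to the newly introduced constants. Your direct argument --- one inclusion trivial because $\varphi\in\unpurify{\purifyrec{\varphi}}$, the other furnished by the canonical $\J$ already built --- is equally valid and has the advantage of reusing the machinery from Part~\ref{it:purify-ent} rather than setting up a separate induction.
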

\begin{proof} (Sketch)
To show Part (\ref{it:purify-ent}),
let $\I$ be a model of $T$ and $\varphi$.
Let $\J$ be an extension of $\I$ such that $A^\J = (\forall \vec x. \psi)^\I$
for each $\forall \vec x. \psi \in \quants{ \purifyrec{ \varphi } }$ where $( A, \vec k ) \leftrightharpoons \forall \vec x. \psi$.
The interpretation $\J$ satisfies $T$ and $\purifyrec{ \varphi }$
To show Part (\ref{it:purify-eq}), notice that
$\unpurify{ \purifyrec{ \varphi } }$ is a set that can be constructed from an initial value $\{ \varphi \}$, 
and updated by adding formulas of the form $( \neg \forall \vec x. \psi \Rightarrow \neg \psi[ \vec k/\vec x]  )$ for fresh constants $\vec k$.
For each step of this construction, it can be shown that the set of models are same when restricted to the interpretation of all variables apart from $\vec k$.
Thus, by induction, $\varphi$ and $\unpurify{ \purifyrec{ \varphi } }$ are $T$-equivalent up to their shared variables.
\qed
\end{proof}

\begin{lemma}
\label{lem:gamma-minimal0} 
For every recursive call to $\solveslt{T}( \Gamma )$,
if $\Gamma, A \models_T \purify{ \psi[ \vec k/\vec x] }$
where $( A, \vec k ) \leftrightharpoons \forall \vec x. \psi$,
then $\Gamma \models_T \forall \vec x. \psi$.
\end{lemma}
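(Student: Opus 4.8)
The plan is to argue about the sets $\Gamma$ occurring in recursive calls via the purification machinery, using Lemma~\ref{lem:purify}. First I would establish a \emph{structural invariant}: every $\Gamma$ handed to $\solveslt{T}$ is closed under purification, in the sense that whenever $\forall\vec x.\psi\in\quants{\Gamma}$ with $(A,\vec k)\leftrightharpoons\forall\vec x.\psi$, the skeleton formula $\neg A\Rightarrow\purify{\neg\psi[\vec k/\vec x]}$ belongs to $\Gamma$, and likewise, recursively, for every closed quantified subformula appearing in it. This is a routine induction on the number of recursive calls: $\solveslt{T}$ is first invoked on $\purifyrec{\labd{(\varphi\lab[\lterm,\pto,C])}}$, and each recursive call passes $\Gamma\cup\purifyrec{A\Rightarrow\psi[\vec t/\vec x]}$, so $\Gamma$ is always a finite union of sets of the form $\purifyrec{\cdot}$; the defining fixpoint condition of $\purifyrec{\cdot}$ is precisely what puts these skeleton formulas into each such set, and closure under taking the guards of nested quantified subformulas follows because that fixpoint iterates.

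Next I would turn the skeleton and the hypothesis into an equivalence. Purification does not touch an outermost negation, so $\purify{\neg\psi[\vec k/\vec x]}=\neg\purify{\psi[\vec k/\vec x]}$ and the skeleton formula reads $\Gamma\models_T\purify{\psi[\vec k/\vec x]}\Rightarrow A$. The hypothesis $\Gamma,A\models_T\purify{\psi[\vec k/\vec x]}$ is the converse implication modulo $\Gamma$, hence $\Gamma\models_T A\Leftrightarrow\purify{\psi[\vec k/\vec x]}$. I then invoke Lemma~\ref{lem:purify}(\ref{it:purify-eq}) together with the invariant: since $\Gamma$ already contains all the auxiliary skeleton formulas occurring in $\purifyrec{\psi[\vec k/\vec x]}$ (together with the relevant instances added in Step~3), the inner guards faithfully track their quantified formulas modulo $\Gamma$, so $\purify{\psi[\vec k/\vec x]}$ may be replaced by $\psi[\vec k/\vec x]$, giving $\Gamma\models_T A\Leftrightarrow\psi[\vec k/\vec x]$. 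Finally, $A$ and the tuple $\vec k$ are the symbols uniquely associated with $\forall\vec x.\psi$ and they occur in $\Gamma$ only through the skeleton formula just analysed, so once all other symbols are interpreted the value of $\vec k$ is unconstrained; generalising over $\vec k$ upgrades the equivalence to $\Gamma\models_T A\Leftrightarrow\forall\vec x.\psi$, and hence $\Gamma\models_T\forall\vec x.\psi$.

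The delicate step is the faithfulness of purification under nested quantifiers, i.e.\ that under $\Gamma$ one may replace $\purify{\psi[\vec k/\vec x]}$ by $\psi[\vec k/\vec x]$. Lemma~\ref{lem:purify}(\ref{it:purify-eq}) is phrased in terms of $\unpurify{\purifyrec{\cdot}}$ rather than the evolving $\Gamma$, and an inner guard becomes faithful only once the instances witnessing its quantified formula have been added; I therefore expect to need an inner induction on quantifier nesting depth, following the innermost-first order $\ord{\Gamma}{\I}$ used by the algorithm, with the structural invariant supplying the required skeleton formulas at each level and the base case (no nested quantifiers, where $\purify{\cdot}$ is the identity) being purely propositional. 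Some bookkeeping of which symbols count as ``shared'' for Lemma~\ref{lem:purify}(\ref{it:purify-eq}) is also needed so that generalising over $\vec k$ in the last step is legitimate.
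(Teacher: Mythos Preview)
Your plan diverges from the paper's and has a genuine gap at the step you yourself flag as delicate.

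The paper does not try to pass from $\purify{\psi[\vec k/\vec x]}$ to the unpurified $\psi[\vec k/\vec x]$. Instead it partitions $\Gamma=\Gamma'\cup\Gamma''$, where $\Gamma'$ consists of the formulas not containing $\vec k$ and $\Gamma''$ contains the skeleton of $\forall\vec x.\psi$ together with all constraints whose guards occur only via $\Gamma''$. The crux is a direct model-extension argument: any $T$-model $\I$ of $\Gamma',A,\neg\purify{\psi[\vec k/\vec x]}$ extends to a $T$-model of all of $\Gamma,A,\neg\purify{\psi[\vec k/\vec x]}$ by interpreting each ``new'' guard $A_1$ as the actual truth value $(\forall\vec y.\psi_1)^\I$ of its associated formula; this contradicts the hypothesis, so $\Gamma',A\models_T\purify{\psi[\vec k/\vec x]}$, and one generalises over $\vec k$, which is fresh in $\Gamma'$ by construction. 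No inner-guard faithfulness is invoked.

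Your step~3, by contrast, needs exactly that faithfulness: replacing $\purify{\psi[\vec k/\vec x]}$ by $\psi[\vec k/\vec x]$ modulo $\Gamma$ requires $\Gamma\models_T A_1\Leftrightarrow\forall\vec y.\psi_1$ for every inner guard $A_1$. The skeleton $\neg A_1\Rightarrow\purify{\neg\psi_1[\vec j/\vec y]}$ yields only one implication. The other is precisely what Lemma~\ref{lem:gamma-minimal0} itself asserts, so appealing to it for inner subformulas is circular: your proposed inner induction would need the premise $\Gamma,A_1\models_T\purify{\psi_1[\vec j/\vec y]}$ at each inner level, and nothing in the outer hypothesis supplies it. The order $\ord{\Gamma}{\I}$ does not rescue this, because the present lemma is stated for an arbitrary recursive call with no minimality assumption; minimality enters only in Lemma~\ref{lem:gamma-minimal}, which \emph{uses} the present lemma.

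Step~4 also fails as written. The constants $\vec k$ occur in $\Gamma$ not only through the one skeleton you name but also through the skeletons (and any Step~3 instances) of the inner quantified subformulas of $\psi[\vec k/\vec x]$, so they are not unconstrained. And even granting $\Gamma\models_T A\Leftrightarrow\forall\vec x.\psi$, the conclusion $\Gamma\models_T\forall\vec x.\psi$ does not follow without $\Gamma\models_T A$, which is not assumed. The paper sidesteps both issues by working inside the subset $\Gamma'$, where $\vec k$ is genuinely absent.
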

\begin{proof}
It suffices to show there exists a subset $\Gamma'$ of $\Gamma$ such that
$\Gamma'$ does not contain $\vec k$ and
$\Gamma' \models_T \purify{ \psi[ \vec k/\vec x] }$
(if such a $\Gamma'$ exists, then since $\Gamma' \subseteq \Gamma$ and $\Gamma'$ does not contain $\vec k$,
we have that $\Gamma \models_T \forall \vec x. \psi$).
By construction, $\Gamma$ may be partitioned into sets $\Gamma'$ and $\Gamma''$,
where $\Gamma'$ does not contain $\vec k$, and $\Gamma''$ contains only:
\begin{enumerate}
\item $\neg A \Rightarrow \purify{ \neg \psi[ \vec k/\vec x] }$, and
\item Constraints of the form $\neg A_1 \Rightarrow \purify{ \neg \psi_1[ \vec j/\vec y] }$ and $\purifyrec{ A_1 \Rightarrow \psi_1[ \vec t/\vec y] }$, 
where $( A_1, \vec j ) \leftrightharpoons \forall \vec y. \psi_1$ and
$A_1$ does not occur in $\Gamma'$.
\end{enumerate}
Assume that $\Gamma \setminus \Gamma'', A, \purify{ \neg \psi[ \vec k/\vec x] }$ has a model $\I$.
Let $\J$ be an extension of $\I$ such that 
for each $( A_1, \vec j ) \leftrightharpoons \forall \vec x. \psi_1$ occurring in $\quants{ \Gamma'' }$ but not in $\quants{ \Gamma' }$,
we have $A_1^\J = ( \forall \vec x. \psi_1 )^\I$.
The interpretation $\J$ satisfies $\Gamma, A, \purify{ \neg \psi[ \vec k/\vec x] }$,
noting that the constraint $\neg A \Rightarrow \purify{ \neg \psi[ \vec k/\vec x] }$ holds since $A^\J$ must be $\top$.
This contradicts the assumption that  $\Gamma, A \models_T \purify{ \psi[ \vec k/\vec x] }$,
and thus $\Gamma \setminus \Gamma'', A, \purify{ \neg \psi[ \vec k/\vec x] }$ is $T$-unsatisfiable,
in other words, $\Gamma', A \models_T \purify{ \psi[ \vec k/\vec x] }$.
Since $A$ does not occur in $\purify{ \psi[ \vec k/\vec x] }$,
we have that $\Gamma' \models_T \purify{ \psi[ \vec k/\vec x] }$, and hence the lemma holds.
\qed
\end{proof}

\begin{lemma}
\label{lem:gamma-minimal} 
If $\Gamma = \{ \purifyrec{ \varphi } \mid \varphi \in S \}$ for some set $S$,
and $\forall \vec x. \psi \in \quants{\Gamma}$ is $\ord{\Gamma}{\I}$-minimal,
then $\Gamma \cup \{ (\neg)\purify{ \psi[ \vec k/\vec x] } \}$ and
$\Gamma \cup \{ (\neg)\psi[ \vec k/\vec x] \}$ are $T$-equivalent up to their shared variables.
\end{lemma}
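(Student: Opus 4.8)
The plan is to reduce the stated equivalence to the single entailment $\Gamma \models_T \big( \purify{\psi[\vec k/\vec x]} \Leftrightarrow \psi[\vec k/\vec x]\big)$, from which both the positive and the negated versions of the lemma follow at once. The qualifier ``up to shared variables'' is then immaterial: by the closure defining $\purifyrec{\cdot}$, every guard and every auxiliary Skolem symbol occurring in $\purify{\psi[\vec k/\vec x]}$ or in $\psi[\vec k/\vec x]$ already occurs in $\Gamma$, so $\Gamma\cup\{\purify{\psi[\vec k/\vec x]}\}$ and $\Gamma\cup\{\psi[\vec k/\vec x]\}$ have the same free symbols, namely those of $\Gamma$. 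To obtain the entailment, write $\psi[\vec k/\vec x]$ as $F(Q_1,\dots,Q_m)$, where $Q_1,\dots,Q_m$ are exactly the closed quantified subformulas of $\psi[\vec k/\vec x]$ not occurring beneath another quantifier and $F$ is a quantifier-free $\Sigma$-context; then $\purify{\psi[\vec k/\vec x]} = F(A_{Q_1},\dots,A_{Q_m})$, where $A_{Q_i}$ is the guard of $Q_i$. Hence it suffices to prove $\Gamma \models_T (A_{Q_i} \Leftrightarrow Q_i)$ for each $i$.

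Fix $i$, write $Q_i = \forall \vec y_i.\,\chi_i$ with $(A_{Q_i},\vec j_i) \leftrightharpoons Q_i$, and note two facts about $\Gamma$. First, since $\forall\vec x.\psi \in \quants{\Gamma}$ and $\Gamma = \bigcup_{\varphi\in S}\purifyrec{\varphi}$, the closure of $\purifyrec{\cdot}$ puts $\neg A_{\forall\vec x.\psi}\Rightarrow \purify{\neg\psi[\vec k/\vec x]}$ into $\Gamma$, and since $\purify{\neg\psi[\vec k/\vec x]}$ mentions $A_{Q_i}$, the same closure puts $\neg A_{Q_i}\Rightarrow \purify{\neg\chi_i[\vec j_i/\vec y_i]}$ into $\Gamma$ (and, recursively, the analogous constraints for every quantified subformula nested in $\chi_i$). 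Second, $Q_i$ lies in $\quants{\purifyrec{\forall\vec x.\psi}}$ and $Q_i \ne \forall\vec x.\psi$, so $\ord{\Gamma}{\I}$-minimality of $\forall\vec x.\psi$ forces $Q_i \notin Q^+_\I(\Gamma)$; since $A_{Q_i}$ occurs in $\Gamma$ we also have $Q_i \in Q(\Gamma)$, so by the definition of $Q^+_\I(\Gamma)$ the Step~2 check succeeded for $Q_i$, i.e.\ $A_{Q_i}^\I = \ltrue$ and $\Gamma, A_{Q_i} \models_T \purify{\chi_i[\vec j_i/\vec y_i]}$. Applying Lemma~\ref{lem:gamma-minimal0} to the latter yields $\Gamma \models_T Q_i$.

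It remains to combine these. Let $\J \models_T \Gamma$. From $\Gamma \models_T Q_i$ we get $Q_i^\J = \ltrue$, hence $\chi_i[\vec j_i/\vec y_i]^\J = \ltrue$ (as $\vec j_i$ is ground and $Q_i = \forall\vec y_i.\chi_i$), hence $\purify{\neg\chi_i[\vec j_i/\vec y_i]}^\J = \lfalse$; but $\neg A_{Q_i}\Rightarrow\purify{\neg\chi_i[\vec j_i/\vec y_i]}$ holds in $\J$, so $A_{Q_i}^\J = \ltrue$. Thus $\Gamma \models_T A_{Q_i}$, and therefore $\Gamma \models_T (A_{Q_i}\Leftrightarrow Q_i)$, which gives the lemma.

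The step I expect to need the most care is ``$\chi_i[\vec j_i/\vec y_i]^\J = \ltrue$ implies $\purify{\neg\chi_i[\vec j_i/\vec y_i]}^\J = \lfalse$'': this requires $\purify{\cdot}$ to be truth-preserving in $\J$ on the quantified subformulas nested inside $\chi_i$, i.e.\ it requires $\Gamma \models_T (A_Q\Leftrightarrow Q)$ for those $Q$ as well. This is handled by structural induction on the body, the induction being legitimate because each such nested $Q$ again lies in $\quants{\purifyrec{\forall\vec x.\psi}}$ with its guard in $\Gamma$, so $\ord{\Gamma}{\I}$-minimality of $\forall\vec x.\psi$ transfers to it verbatim. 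The main obstacle is thus not a single hard idea but arranging this mutual dependence --- ``$\purify{\cdot}$ is faithful on the body'' versus ``$\Gamma$ entails the guard-equivalences'' --- to bottom out along the subformula ordering; the rest is bookkeeping of the $\purifyrec{\cdot}$ closure together with Lemma~\ref{lem:gamma-minimal0}.
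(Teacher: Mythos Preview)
Your approach coincides with the paper's: reduce the equivalence to showing, for each inner quantified $Q_i$ of $\psi[\vec k/\vec x]$, that $\Gamma \models_T Q_i$ via minimality plus Lemma~\ref{lem:gamma-minimal0}, and then pull the guards into line. You supply more of the mechanics than the paper's sketch does --- in particular the separate argument that $\Gamma \models_T A_{Q_i}$ and the explicit induction on nesting depth --- but the skeleton is the same.

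There is, however, a genuine gap in one of your inferences. From $Q_i \notin Q^+_\I(\Gamma)$ you conclude ``$A_{Q_i}^\I = \ltrue$ and $\Gamma, A_{Q_i} \models_T \purify{\chi_i[\vec j_i/\vec y_i]}$''. That does not follow. The Step~2 check in the algorithm is only carried out for quantified formulas whose guard is \emph{true} in $\I$; accordingly $Q^+_\I(\Gamma)$ is, by the paper's wording, the set of formulas with $A^\I=\ltrue$ for which the entailment fails. Hence $Q_i \notin Q^+_\I(\Gamma)$ leaves open the possibility $A_{Q_i}^\I=\lfalse$, in which case the entailment $\Gamma, A_{Q_i} \models_T \purify{\chi_i[\vec j_i/\vec y_i]}$ was never established and you cannot invoke Lemma~\ref{lem:gamma-minimal0} to get $\Gamma \models_T Q_i$. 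Concretely, take $Q_i = \forall y.\,(y\teq 0)$ with guard $B$ and Skolem $j$; if $\Gamma$ contains only $\neg B \Rightarrow j\not\teq 0$ (and whatever else is needed for the outer formula) and $\I$ sets $B^\I=\lfalse$, then $Q_i \notin Q^+_\I(\Gamma)$, yet $\Gamma \not\models_T Q_i$ and $\Gamma \not\models_T (B \Leftrightarrow Q_i)$, since $\Gamma$ has models with $B=\ltrue$ while $Q_i$ is false. Your downstream argument for $\Gamma \models_T A_{Q_i}$, and therefore the whole induction, stalls on this case. The paper's one-paragraph sketch makes the same leap without flagging it; your more detailed write-up simply exposes where the difficulty lies.
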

\begin{proof}
(Sketch)
By definition of $\ord{\Gamma}{\I}$-minimal,
we have that $\Gamma, A_0 \models_T \purify{ \psi_0[ \vec k/\vec x] }$
for all $\forall \vec x_0. \psi_0 \in Q^+_\I( \purify{ \psi[ \vec k_0/\vec x_0] } )$
where $( A_0, \vec k_0 ) \leftrightharpoons \forall \vec x_0. \psi_0$.
For each such formula, by Lemma~\ref{lem:gamma-minimal0},
$\Gamma \models_T \forall \vec x_0. \psi_0$.
Thus, 
$\Gamma \cup \{ (\neg) \purify{ \psi[ \vec k/\vec x] } \}$ and 
$\Gamma \cup \{ (\neg) \unpurify{ \purify{ \psi[ \vec k/\vec x] } } \}$
are $T$-equivalent up to their shared variables,
which by Lemma~\ref{lem:purify}.\ref{it:purify-eq} implies that
$\Gamma \cup \{ (\neg)\purify{ \psi[ \vec k/\vec x] } \}$ and
$\Gamma \cup \{ (\neg)\psi[ \vec k/\vec x] \}$ are $T$-equivalent up to their shared variables.
\qed
\end{proof}

\begin{lemma}
\label{lem:solvet}
For all $\seplogl{T}$ formulae $\varphi$,
$\solveslt{T}( \purifyrec{ \psi } )$ where $\psi$ is $\labd{(\varphi \lab [\lterm,\pto,C])}$:
\begin{enumerate}
\item \label{it:tunsat} Answers ``unsat" only if $\psi$ is $T$-unsatisfiable.
\item \label{it:tsat} Answers ``sat" only if $\psi$ is $T$-satisfiable.
\item \label{it:tterm} Terminates.
\end{enumerate}
\end{lemma}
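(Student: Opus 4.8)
The plan is to prove the three claims together, by induction along the sequence of recursive calls $\solveslt{T}(\Gamma_0), \solveslt{T}(\Gamma_1), \ldots$ that a run of $\solveslt{T}(\purifyrec{\psi})$ produces, where $\psi = \labd{(\varphi \lab [\lterm,\pto,C])}$, $\Gamma_0 = \purifyrec{\psi}$, and each $\Gamma_{i+1} = \Gamma_i \cup \purifyrec{A \Rightarrow \psi_0[\vec t/\vec\qvar]}$ is the set passed on in Step~5. The crucial invariant is the hypothesis form of Lemma~\ref{lem:gamma-minimal}: every $\Gamma_i$ equals $\{\purifyrec{\chi} \mid \chi \in S_i\}$ for a set $S_i$ with $S_0 = \{\psi\}$, where each formula added to $S_i$ has the shape $A \Rightarrow \psi_0[\vec t/\vec\qvar]$ with $(A,\vec k) \leftrightharpoons \forall \vec\qvar \subseteq \vec s. \psi_0$ and, by the choice of $\vec t$ in Step~4, $\vec t \subseteq \vec s$. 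Replacing every guard by its associated quantified formula turns such an implication into the $T$-valid sentence $(\forall \vec\qvar \subseteq \vec s. \psi_0) \Rightarrow \psi_0[\vec t/\vec\qvar]$; hence $\bigwedge \unpurify{S_i}$ is $T$-equivalent to $\psi$, and $\Gamma_i$ is $\purifyrec{}$ applied to this formula.

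Part~\ref{it:tunsat} then follows quickly: a ``unsat'' answer means $\Gamma_i$ is $T$-unsatisfiable at some call, and if $\psi$ were $T$-satisfiable then $\bigwedge \unpurify{S_i}$ would be too, so by Lemma~\ref{lem:purify}.\ref{it:purify-ent} $\Gamma_i = \purifyrec{\bigwedge \unpurify{S_i}}$ would be $T$-satisfiable --- a contradiction. For Part~\ref{it:tsat}, a ``sat'' answer arises from a $T$-model $\I$ of some $\Gamma_i$ that passes the Step~2 check; I would show $\I$ extends to a model of $\unpurify{\Gamma_i}$, which by the invariant and Lemma~\ref{lem:purify}.\ref{it:purify-eq} is $T$-equivalent to $\psi$ up to shared variables, so $\psi$ is $T$-satisfiable. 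For each quantified formula in $\quants{\Gamma_i}$ whose guard $A$ is true in $\I$, the check gives $\Gamma_i, A \models_T \purify{\psi_0[\vec k/\vec\qvar]}$, hence by Lemma~\ref{lem:gamma-minimal0} $\Gamma_i \models_T \forall\vec\qvar.\psi_0$ and $\I \models \forall\vec\qvar.\psi_0$, matching $A^\I = \ltrue$. For a guard $A$ false in $\I$, the constraint $\neg A \Rightarrow \purify{\neg\psi_0[\vec k/\vec\qvar]}$ retained in $\Gamma_i$ --- together with the analogous constraints kept for all nested guards --- forces, by an innermost-outward induction, that $\I$ with each Skolem symbol read as its own value falsifies $\forall\vec\qvar.\psi_0$, matching $A^\I = \lfalse$; here Lemma~\ref{lem:gamma-minimal} packages the needed purification bookkeeping for the minimal quantified formulae.

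For Part~\ref{it:tterm} it is enough to bound the number of recursive calls. Because $\psi$ is the normal form produced by the bounded-quantifier rules of Section~\ref{sec:boundq}, it contains only finitely many quantified subformulae, and every bounding set $\vec s$ occurring in the run is a tuple of finite unions of ground terms taken from a fixed finite set determined by $\varphi$ (the terms of $\terms{\varphi}$, the constants of $C$, and the finitely many constants introduced by the $*$ and $\wand$ rules when $\psi$ was built). Consequently only finitely many distinct quantified formulae can appear in any $\quants{\Gamma_i}$, and each admits only finitely many instances $\psi_0[\vec t/\vec\qvar]$ with $\vec t \subseteq \vec s$. Moreover no instance is added twice: in Step~3 one has $\J \models \Gamma_i \cup \{A, \neg\purify{\psi_0[\vec k/\vec\qvar]}\}$ and in Step~4 one picks $\vec t$ with $\vec t^\J = \vec k^\J$, so $\J \not\models \purify{\psi_0[\vec t/\vec\qvar]}$; if $\purifyrec{A \Rightarrow \psi_0[\vec t/\vec\qvar]}$ were already in $\Gamma_i$ we would get $\Gamma_i, A \models_T \purify{\psi_0[\vec t/\vec\qvar]}$, contradicting $\J \models \Gamma_i \cup \{A\}$. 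Hence each call strictly enlarges $\Gamma_i$ by a fresh instance drawn from a finite reservoir, and the recursion must stop.

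The hard part is Part~\ref{it:tterm}. It depends on the fact that instantiation in $\solveslt{T}$ never creates genuinely new quantifiers but only substitutes ground terms from a fixed finite set into a fixed finite family of quantifier bodies --- precisely the property that the bounded normal form of Section~\ref{sec:boundq} is engineered to provide. Pinning down that ``fixed finite pool of terms'' and that nested instantiation does not regress, while threading the purification bookkeeping of Lemmas~\ref{lem:purify}, \ref{lem:gamma-minimal0} and~\ref{lem:gamma-minimal} through Parts~\ref{it:tunsat} and~\ref{it:tsat}, is where essentially all the work lies.
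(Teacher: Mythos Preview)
Your overall decomposition matches the paper's: track the sequence $\Gamma_0,\Gamma_1,\ldots$, maintain that $\unpurify{\Gamma_i}$ is $T$-equivalent to $\psi$, derive Part~\ref{it:tunsat} from Lemma~\ref{lem:purify}.\ref{it:purify-ent}, derive Part~\ref{it:tsat} by showing the Step~2 model already satisfies $\unpurify{\Gamma_i}$ via Lemma~\ref{lem:gamma-minimal0}, and for Part~\ref{it:tterm} argue finitely many instances plus no repeats. Parts~\ref{it:tunsat} and~\ref{it:tsat} are fine.

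There is, however, a genuine gap in your ``no instance is added twice'' step. From $\J \models \neg\purify{\psi_0[\vec k/\vec\qvar]}$ and $\vec t^\J = \vec k^\J$ you conclude $\J \not\models \purify{\psi_0[\vec t/\vec\qvar]}$. This inference is not valid: $\purify{\cdot}$ is a \emph{syntactic} operation, and if $\psi_0$ contains a quantified subformula with $\vec\qvar$ free, then $\psi_0[\vec k/\vec\qvar]$ and $\psi_0[\vec t/\vec\qvar]$ are syntactically distinct closed quantified formulae and hence receive \emph{different} guards under $\purify{\cdot}$. Nothing forces $\J$ to interpret those two fresh boolean guards the same way, even when $\vec t^\J = \vec k^\J$. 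So semantic equality of the substituted terms does not transfer through purification.

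The paper closes exactly this gap using the $\ord{\Gamma}{\I}$-minimality of the chosen formula: from $\J \models \purify{\psi_0[\vec t/\vec\qvar]}$ (via the already-present instance) and $\J \models \neg\purify{\psi_0[\vec k/\vec\qvar]}$, Lemma~\ref{lem:gamma-minimal} lets one pass to the \emph{unpurified} statements $\J \models \psi_0[\vec t/\vec\qvar]$ and $\J \models \neg\psi_0[\vec k/\vec\qvar]$, and only then does $\vec t^\J = \vec k^\J$ yield the contradiction. You cite Lemma~\ref{lem:gamma-minimal} in your closing paragraph but never deploy it (nor the minimality hypothesis it requires) in the termination argument, which is precisely where it is needed.
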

\begin{proof}
Assume $\solveslt{T}( \Gamma_i )$ calls $\solveslt{T}( \Gamma_{i+1} )$ for $i=0, \ldots, n-1$,
where $n$ is finite and $\Gamma_0 = \purifyrec{ \psi }$.
By definition of $\solveslt{T}$ and Lemma~\ref{lem:purify} (\ref{it:purify-eq}),
it can be shown that 
$\unpurify{ \Gamma_i }$ and $\unpurify{ \Gamma_{i+1} }$ are equisatisfiable in $T$,
and thus by induction
$\unpurify{ \Gamma_j }$ and $\unpurify{ \Gamma_k }$ are equisatisfiable in $T$ for each $j, k \in \{ 1, \ldots, n \}$.

To show Part (\ref{it:tunsat}), assume without loss of generality, that
$\solveslt{T}( \Gamma_n )$ answers ``unsat".
Then $\Gamma_n$ is $T$-unsatisfiable, and
by Lemma~\ref{lem:purify}.\ref{it:purify-ent}, we have that $\unpurify{ \Gamma_n }$ is $T$-unsatisfiable.
Thus, $\unpurify{ \Gamma_0 }$ is $T$-unsatisfiable, and thus by 
Lemma~\ref{lem:purify}.\ref{it:purify-eq}, $\psi$ is $T$-unsatisfiable.

To show Part (\ref{it:tsat}), without loss of generality,
$\solveslt{T}( \Gamma_n )$ answers ``sat".
Then $\Gamma_n$ is $T$-satisfiable with model $\I$.
We argue that $\I$ satisfies $\unpurify{\Gamma_n}$,
which implies that $\unpurify{ \Gamma_0 }$ is $T$-satisfiable, and thus by 
Lemma~\ref{lem:purify}.\ref{it:purify-eq}, $\psi$ is $T$-satisfiable.
Assume that $\I$ satisfies $\Gamma_n$ but does not satisfy $\unpurify{\Gamma_n}$.
Thus,
$A^\I \neq (\forall \vec \qvar. \psi)^\I$
for some $\forall \vec \qvar. \psi \in \quants{\Gamma_n}$ where $(A, \vec k) \leftrightharpoons \forall \vec \qvar. \psi$.
In the case that $A^\I = \bot$ and $(\forall \vec \qvar. \psi)^\I = \top$,
note that $\I$ does not satisfy the formula $( \neg A \Rightarrow \purify{ \neg \varphi[ \vec k/\vec \qvar] } ) \in \Gamma_n$.
In the case that $A^\I = \top$ and $(\forall \vec \qvar. \psi)^\I = \bot$,
by the definition of $\solveslt{T}$,
we have $\Gamma_n, A \models_T \purify{ \psi[ \vec k/\vec \qvar] }$.
By Lemma~\ref{lem:gamma-minimal0},
we have that $\Gamma_n \models_T \forall \vec x. \psi$.
Since $\I$ is a model of $\Gamma_n$, it must be the case that $(\forall \vec x. \psi)^\I = \top$, contradicting the assumption that
$(\forall \vec \qvar. \psi)^\I = \bot$.
This contradicts the assumption that $\I$ does not satisfy $\unpurify{\Gamma_n}$.

To show Part (\ref{it:tterm}),
first note that the checks for $T$-satisfiability and $T$-entailment terminate,
by assumption of a decision procedure for the $T$-satisfiability of quantifier-free formulae.
Moreover, for each quantified formula $\forall \vec \qvar. \psi$, 
only a finite number of instances $A \Rightarrow \psi[ \vec t/ \vec \qvar ]$ exist for which the algorithm will
add $\purifyrec{ A \Rightarrow \psi[ \vec t/ \vec \qvar ] }$ to $\Gamma$,
which implies that the algorithm will consider only a finite number of quantified formulae,
each for which only a finite number of instances will be added in this way.
Thus, it suffices to show that the algorithm adds to $\Gamma$ only sets
$\purifyrec{ A \Rightarrow \psi[ \vec t/ \vec \qvar ] }$ that are not a subset of $\Gamma$.
Assume this is not the case for some 
$\purifyrec{ A \Rightarrow \psi[ \vec t/ \vec \qvar ] }$,
where $\forall \vec \qvar. \psi \in \quants{\Gamma_n}$, $\forall \vec \qvar. \psi$ is $\ord{\Gamma}{\I}$-minimal, 
and $(A, \vec k) \leftrightharpoons \forall \vec \qvar. \psi$.
The terms $\vec t$ meet the criteria in the procedure, in particular,
for some model $\J$ of $\Gamma \cup \{ \neg \purify{\psi[ \vec k/\vec \qvar]} \}$ where $A^\J = \top$,
we have $\vec t^\J = \vec k^\J$.
Since $A^\J = \top$ and $(A \Rightarrow \purify{ \psi[ \vec t/ \vec \qvar ] }) \in \purifyrec{ A \Rightarrow \psi[ \vec t/ \vec \qvar ] } \subseteq \Gamma$,
$\J$ satisfies $\purify{\psi[ \vec t/ \vec \qvar ] }$.
Also, $\J$ satisfies $\neg \purify{ \psi[ \vec k/\vec \qvar] }$.
By Lemma~\ref{lem:gamma-minimal},
since $\forall \vec \qvar. \psi$ is $\ord{\Gamma}{\I}$-minimal,
$\J$ satisfies $\psi[ \vec t/ \vec \qvar ]$ and $\neg \psi[ \vec k/\vec \qvar]$.
Thus, $\vec k^\J \neq \vec t^\J$, contradicting our assumption.
\qed
\end{proof}

\begin{proof}
To show Part (\ref{it:sltunsat}), 
by Lemma~\ref{lem:solvet}(\ref{it:tunsat}),
it must be the case that $\labd{(\varphi \lab [\lterm,\pto,C])}$ is $T$-unsatisfiable.
Thus, there cannot be heaps meeting the requirements of $\tup{h'}$ and $\tup{h''}$ in Lemma~\ref{lemma:bounded-labeling},
and by that lemma means that $\varphi$ is $( \seplog, T )$-unsatisfiable.
To show Part (\ref{it:sltsat}),
by Lemma~\ref{lem:solvet}(\ref{it:tsat}),
it must be the case that $\labd{(\varphi \lab [\lterm,\pto,C])}$ is $T$-satisfied by a model of $T$, call it $\J$.
Let $h'$ be the heap with domain $\lterm^\J$ such that $h'( u ) = \pto^\J( u )$ for all $u \in \lterm^\J$,
and let $h''=\pto^\J$.
Since $\locs$ has infinite cardinality, and due to the structure of $\labd{(\varphi \lab [\lterm,\pto,C])}$, 
we may assume that $C^\J \subseteq \locs^\J \setminus \terms{\varphi}^\J$,
call this set $L$.
Let $v = d^\J$.
Let $\I$ be an interpretation such that $\I[\lterm \leftarrow \dom(h')][\pto \mapsfrom h''][C \leftarrow L][ d \leftarrow v ] = \J$.
Since by assumption $\J \models_T \labd{(\varphi \lab [\lterm,\pto,C])}$, then by Lemma~\ref{lemma:bounded-labeling}, there exists a heap $h$ such that $\I, h \models_{\seplog} \varphi$,
and thus $\varphi$ is $(\seplog,T)$-satisfiable.
Part (\ref{it:sltterm}) is an immediate consequence of Lemma~\ref{lem:solvet}(\ref{it:tterm}).
\qed
\end{proof}

\subsection{Proof of Corollary \ref{cor:slt-pspace}}

\proof{PSPACE-hardness is by the reduction from QSAT, which
  generalizes \cite[Definition 7]{CalcagnoYangOHearn01} to our
  case. Let $\phi \equiv \forall x_1 \exists y_1 \ldots \forall x_n
  \exists y_n ~.~ \psi$ be an instance of QSAT, where $\psi$ is a
  boolean combination of the variables $x_i$ and $y_i$ of sort
  $\bools$. We encode $\phi$ in $\seplog(T)$ using the translation
  function $\mathrm{Tr}(.)$, defined by induction on the structure of
  $\phi$:
  \[\begin{array}{rclcrcl}
  \mathrm{Tr}(x_i) & \equiv & (\tterm_i \mapsto d) * \top & \hspace*{0.5cm} & \mathrm{Tr}(y_i) & \equiv & (\uterm_i \mapsto d) * \top \\
  \mathrm{Tr}(\neg \psi) & \equiv & \neg \mathrm{Tr}(\psi) & \hspace*{0.5cm} & \mathrm{Tr}(\psi_1 \bullet \psi_2) & \equiv & \mathrm{Tr}(\psi_1) \bullet \mathrm{Tr}(\psi_2) \\
  \mathrm{Tr}(\exists y_i ~.~ \psi) & \equiv & ((\uterm_i \mapsto d) \vee \emp) * \mathrm{Tr}(\psi) & \hspace*{0.5cm} &
  \mathrm{Tr}(\forall x_i ~.~ \psi) & \equiv & \neg(((\tterm_i \mapsto d) \vee \emp) * \neg\mathrm{Tr}(\psi))
  \end{array}\]
  where $d$ is constant of sort $\data$ and $\bullet \in
  \set{\wedge,\vee}$. It is not hard to check that $\phi = \top$ if
  and only if there exists a $T$-interpretation $\I$ and a heap $h$
  such that $\I,h \models_\seplog \mathrm{Tr}(\phi)$. 

  To prove that $(\seplog,T)$-satisfiability is in PSPACE, we analyse
  the space complexity of the $\solvesl{T}$ algorithm. First, for any
  $\seplog(T)$-formula $\varphi$, let $\size{\varphi}$ be the size of
  the syntax tree of $\varphi$. Clearly
  $\card{\lfloor\varphi\rfloor^*} \leq \size{\varphi}$, and moreover,
  $\size{\psi} \leq \size{\varphi}$ for each $\psi \in
  \lfloor\varphi\rfloor^*$. Then a representation of the set
  $\lfloor\varphi\rfloor^*$ by simply enumerating its elements will
  take space at most $\size{\varphi}^2$. For a set of formulae
  $\Gamma$, let $\size{\Gamma}=\sum_{\varphi\in\Gamma}\size{\varphi}$
  denote the size of its enumerative representation. 

  Second, it is not difficult to see that $\len{\varphi} \leq
  \size{\varphi}$ and $\card{\terms{\varphi}} \leq \size{\varphi}$,
  for any $\seplog(T)$-formula $\varphi$. Then, for each subformula
  $\forall x \subseteq s ~.~ \psi$ of $\labd{\varphi \lab
    [\ell,\pto,C]}$, we have $\card{s} \leq \size{\varphi}^2$ -- in
  fact $\card{s} \leq \size{\varphi}$ if $x$ is of sort $\sets(\locs)$
  and $\card{s} \leq \size{\varphi}^2$ if $x$ is of sort $\locs
  \mapsto \data$. Then there are at most $\size{\varphi}^2$ recursive
  calls on line 3 of the $\solveslt{T}$ procedure, that corresponds to
  an instance of the subformula $\forall x \subseteq s ~.~ \psi$ of
  $\labd{\varphi \lab [\ell,\pto,C]}$. Since there are at most
  $\size{\varphi}$ such subformulae, there are at most
  $\size{\varphi}^3$ recursive calls to $\solveslt{T}$ with arguments
  $\Gamma_0, \ldots, \Gamma_n$, respectively. Moreover, we have
  $\Gamma_0 = \set{\labd{\varphi \lab [\ell,\pto,C]}}$, thus
  $\size{\Gamma_0} = \mathcal{O}(\size{\varphi})$ and for each
  $i=0,\ldots,n-1$,
  $\size{\Gamma_{i+1}}\leq\size{\Gamma_i}+\size{\varphi}^2$, because a
  set of formulae $\lfloor A \Rightarrow
  \psi[\vec{t}/\vec{x}]\rfloor^*$ of size at most $\size{\varphi}^2$ is
  added to $\Gamma_i$. Because $T$-satisfiability is in PSPACE, by the
  hypothesis, the checks at lines 1 and 2 can be done within space
  bounded by a polynomial in $\size{\varphi}$, thus the space needed
  by $\solvesl{T}(\varphi)$ is also bounded by a polynomial in
  $\size{\varphi}$. Hence the $(\seplog,T)$-satisfiability problem is
  in PSPACE. \qed}

\end{document}